\author{Moses Charikar, Beidi Chen, Christopher R\'{e}, and Erik Waingarten}
\title{Fast Algorithms For a New Relaxation of Optimal Transport}
\begin{document}         
\maketitle

\begin{abstract}
We introduce a new class of objectives for optimal transport computations of datasets in high-dimensional Euclidean spaces. The new objectives are parametrized by $\rho \geq 1$, and provide a metric space $\calR_{\rho}(\cdot, \cdot)$ for discrete probability distributions in $\R^d$. As $\rho$ approaches $1$, the metric approaches the Earth Mover's distance, but for $\rho$ larger than (but close to) $1$, admits significantly faster algorithms. Namely, for distributions $\mu$ and $\nu$ supported on $n$ and $m$ vectors in $\R^d$ of norm at most $r$ and any $\eps > 0$, we give an algorithm which outputs an additive $\eps r$ approximation to $\calR_{\rho}(\mu, \nu)$ in time $(n+m) \cdot \poly((nm)^{(\rho-1)/\rho} \cdot 2^{\rho / (\rho-1)} / \eps)$.
\end{abstract}

\newpage

\tableofcontents

\newpage

\section{Introduction}

This paper is about algorithms for optimal transport problems in high dimensional Euclidean spaces. At a very high level, optimal transport problems provide a convenient metric space between probability distributions supported on vectors in geometric spaces. The most classical such problem is the Earth Mover's Distance ($\EMD$). Let $\mu$ and $\nu$ be two distributions supported on vectors in $\R^d$. The Earth Mover's Distance between $\mu$ and $\nu$, also known as the Wasserstein-$1$ distance, is given by minimizing the average distance between pairs of points sampled from a coupling $\gamma$ of $\mu$ and $\nu$:
\begin{align}
\EMD(\mu, \nu) = \min \left\{ \Ex_{(\bx, \by) \sim \gamma}\left[\| \bx - \by\|_2 \right] : \text{ $\gamma$ is a coupling of $\mu$ and $\nu$\footnotemark } \right\}. \label{eq:emd}
\end{align} 
Importantly, the Earth Mover's distance is a metric on the space of probability distributions supported on $\R^d$, which takes a ``ground metric'' (in this case, the Euclidean distances) and defines a metric over the space of distributions supported on the ground metric. \footnotetext{If the distributions $\mu$ and $\nu$ are supported on the set of points $x_1,\dots, x_n$ and $y_1, \dots, y_m$, respectively, then we can think of $\gamma$ as being specified by an $n \times m$ matrix of non-negative real numbers. The constraint ``$\gamma$ is a coupling of $\mu$ and $\nu$'' means that the $i$-th row of the $n \times m$ matrix $\gamma$ sums to $\mu_i$ and the $j$-th column of the $n\times m$ matrix sums to $\nu_j$.}
The resulting notion of similarity or dissimilarity is then used to formulate problems on approximating or learning a distribution supported in $\R^d$. 

It is no surprise that the optimal transport has become ubiquitous in machine learning. We refer the reader to the monograph~\cite{PC19} for a comprehensive overview, but a few notable examples include~\cite{KSKW15, CFTR16, ACB17}. As argued in~\cite{PC19}, the most recent progress on optimal transport for machine learning has been due to new formulations and approximation algorithms which can scale to larger problem instances. Specifically, there has been a focus on the so-called entropy-regularized optimal transport, also known as ``Sinkhorn distances,'' and (accurate) approximation algorithms which run in quadratic time (in the original representation for Euclidean inputs)~\cite{C13}. The goal of this work is to further explore such optimal transport questions from the computational perspective, where we will seek much faster \emph{sub-quadratic} algorithms for computing optimal transport distances.

\ignore{Computationally, two input distributions are specified by giving two sets of vectors $\{x_1,\dots, x_n \}$ and $\{y_1,\dots, y_m\}$ in $\R^d$ which are the supports of the two distributions, as well as two vectors $\mu \in \R^n$ and $\nu \in \R^m$, where $\mu_i$ and $\nu_j$ indicate the probability that the distributions sample $x_i$ and $y_j$, respectively. The algorithmic task is to (approximately) compute the Earth Mover's Distance $\EMD(\mu, \nu)$ between the two distributions as fast as possible. Ideally, spending as close to linear time (which is $nd$).}

\ignore{Our main regime of interest is the so-called ``high dimensional regime'', where $\omega(\log n) \leq d \leq n^{o(1)}$. In this setting, arbitrary polynomial dependencies on the underlying dimension are acceptable, since these incur a factor of $n^{o(1)}$; however, algorithms should avoid exponential dependencies on $d$, since these would result in super-polynomial factors in $n$. }

As we explain next, the algorithmic landscape for optimal transport remains very much unknown. On the one hand, the algorithms community has devoted a significant effort (\cite{C02, IT03, I04, AIK08, ABIW09, SA12, AS14,  ANOY14, BI14, AKR15, KNP19, BDIRW20, CJLW22, ACRX22}) to developing fast algorithms for approximating $\EMD$. We expand on these shortly, but, at a high level, all approaches rely on efficient spanner constructions or approximate nearest neighbor data structures. These algorithm are fast (approaching linear time), but they run into a serious approximation bottleneck. For high-dimensional Euclidean spaces, almost-linear time algorithms incur large constant-factor approximations, making these approaches undesirable.\footnote{For example, a $2$-approximation which is already oftentimes too big, incur a polynomial overhead of $(n+m)^{1/7}$~\cite{AR15}.} Instantiating these techniques for accurate, $(1\pm \eps)$-approximations degrades the algorithmic performance to essentially quadratic time.\footnote{An algorithm for $\EMD(\mu, \nu)$ (or any problem whose output is a positive real number) which achieves approximation factor $c > 1$ is an algorithm which outputs a number which is larger than $\EMD(\mu, \nu)$ and is at most $c \cdot \EMD(\mu, \nu)$ with high probability. These are \emph{multiplicative} approximations, and we will also refer to \emph{additive} $\eps r$-approximations which outputs a quantity which is up to $\pm \eps r$ from a desired quantity.}

On the other hand, algorithms for the entropy-regularized optimal transport do achieve accurate additive $\pm \eps r$ approximations for datasets of diameter $r$, but have running times which are quadratic in the original representation of the input. In particular, the input distributions are specified by the vectors in their support and the probabilities with which they are sampled. However, the first step of the algorithm involves explicitly materializing the distance matrix encoding all pairwise distances between the vectors. As the support of these distributions grows, this first step is already a major hurdle. While there have been approaches to avoid materializing the entire matrix~\cite{BRPP15, ABRN19, PC19b}, these methods consider a projection of the points onto a low-dimensional space and the resulting optimization costs (of the low-dimensional $\EMD$ or Sinkhorn distances) cannot be related back to the original distribution without a significant loss in approximation.

This work seeks to explore the best of both worlds from the algorithmic perspective. We will give a new class of objectives for optimal transport problems which also provide metric spaces for probability distributions of high-dimensional Euclidean spaces (like the Earth Mover's distance and Sinkhorn distances). The main benefit is that (i) these metrics smoothly perturb the Earth Mover's distance, (ii) admit efficient algorithms with running times which are significantly sub-quadratic (like the Earth Mover's distance), and (iii) give accurate $\pm \eps r$-approximations for distributions whose supports have diameter at most $r$ (like the Sinkhorn distances). The key will be to never explicitly compute the quadratic-size distance matrix. Instead, we show how one may implement a Sinkhorn-like update procedure using recent algorithms for the problem of kernel density estimation.

\subsection{Related Work: The Spanner Approach for $\EMD$} 

The Earth Mover's Distance can be naturally cast as an uncapacitated minimum cost flow problem. The reduction is straight-forward. One may consider the (weighted) complete bipartite graph $G = (U, V, E = U \times V, w)$ where each vertex of $U$ is a vector in the support of $\mu$ and each vertex in $V$ is a vector in the support of $\nu$ and the weights (or cost) $w$ of an edge $e = (i, j)$ is $w(e) = \| x_i - y_j\|_2$. The distributions may then be written as vectors $\mu \in \R^n$ and $\nu \in \R^m$ which encode the ``supply'' and ``demand'', and the Earth Mover's Distance is the minimum cost flow on $G$ according to the supply/demands $\mu$ and $\nu$ with costs $w$ (there is no need for capacities in this reduction). Over the years, graph algorithms have become incredibly efficient, so applying graph-based min-cost flow solvers with the reduction above gives exact algorithms for $\EMD$ running in time $(nm)^{1+o(1)}$.\footnote{The relevant citation for a fast min-cost flow algorithm is the recent breakthrough of~\cite{CKLPPS22}. These give exact  algorithms for graphs whose time in almost-linear in the number of edges, $nm$ of the graph. The other relevant citation is \cite{S17}, giving algorithms for $1+\eps$-approximation to uncapacitated min-cost flow in the same amount of time, which suffices for $\EMD$.}

The above approach paves the way for faster approximation algorithms by using graph spanners. For any $c > 1$, one seeks a graph $H$ with substantially fewer edges on the vertex set $U$ and $V$. The desired property is that for any $i \in U$ and $j \in V$, the total length of the shortest path between $i \in U$ and $j \in V$ along edges of $H$ should be a factor of $c$-approximation to the distance between the underlying vectors $x_i$ and $y_j$. Running the min-cost flow algorithms on $H$ is faster (since there are fewer edges), and give a $c$-approximation for $\EMD$.  While sparse spanners for Euclidean distances do exist, as the approximation $c$ approaches $1+\eps$, the size of these spanners become $mn$. 

Instead, the focus has been on obtaining sparse spanners for (large) constant factor approximations. For example, for any $c > 1$,~\cite{HIS13} gives $c$-spanners of size $(n+m)^{1+1/c^2}$ for Euclidean spaces ($\ell_2$) in time $\tilde{O}((n+m)^{1+1/c^2})$ (which is fast when we allow a large $c$). The other approach, taken in \cite{AS14}, does not explicitly use a spanner, but uses an approximate nearest neighbor search data structure. The resulting time and approximation depends on the time and approximation for nearest neighbor search, but similarly to before, the approximation is large when the algorithms are fast.

\subsection{Related Work: Sinkhorn Distances}

The algorithm which is widely used for computing an optimal transport is the Sinkhorn algorithm for entropy-regularized optimal transport~\cite{C13, AWR17} (see also, the recent work~\cite{PLHPB20, LNNPHH21}). Given two distributions $\mu$ and $\nu$ supported on vectors in $\R^d$, the entropy-regularized optimal transport introduces an entropic regularization term to the the Earth Mover's distance. Specifically, for any $\eta \geq 0$, it optimizes
\begin{align*}
\SNK_{\eta}(\mu, \nu) = \min\left\{ \Ex_{(\bx, \by) \sim \gamma}\left[ \| \bx - \by\|_2 \right] - \eta H(\gamma) : \text{ $\gamma$ is a coupling of $\mu$ and $\nu$ } \right\}.
\end{align*}
The main benefit is that the algorithm for optimizing $\SNK_{\eta}(\mu, \nu)$ performs extremely well. The algorithm used is iterative, and uses $\poly(1/(\eta\eps))$ iterations to output a solution which is an additive $\pm \eps r$-approximation (where $r$ is the maximum distance between any pair of points in the support of $\mu$ and $\nu$). Oftentimes, the maximum distance $r$ is not too large (for example, it is at most $2$ on the unit sphere), making the algorithm very desirable in practice. However, the main downside is that the algorithm explicitly computes the $nm$-distance matrix of pairwise distances of vectors in the support of $\mu$ and $\nu$. Indeed, the algorithm does not use the fact that distances are Euclidean and generalizes to non-Euclidean metrics. The main downside is that, for distributions on Euclidean spaces, the description of the input (of size $O(d(n+m))$) is blown up to a quadratic $nm$-size distance matrix, which can be a major bottleneck in the computation if $n$ and $m$ are very large. Finally, it is important to note that, we currently do not know whether the original Earth Mover's distance admits a similar $\eps r$-approximation for bounded datasets in time substantially smaller than $nm$. 

\subsection{Our Contributions} 

This paper addresses the following questions:
\begin{enumerate}
\item\label{q:one} Do there exists optimal transport metrics which do admit good approximations in significantly sub-quadratic time? In particular, can we match the approximation guarantees from Sinkhorn distances with the algorithmic techniques from the Earth Mover's distance?
\item\label{q:two} Can one combine techniques, like locality-sensitive hashing (LSH) and embeddings, with the alternating updates procedure in Sinkhorn's algorithm even though approximations incurred from using LSH and embeddings tend to incur large constant factors?
\end{enumerate}
Our main contribution is introducing a class of objective functions for optimal transport computations. The new objectives $\calR_{\rho}(\mu, \nu)$ are parametrized by $\rho \geq 1$ and provide metric spaces over discrete distributions in $\R^d$. As $\rho$ approaches 1, $\calR_{\rho}(\mu, \nu)$ approaches $\EMD(\mu, \nu)$, but enjoys favorable computational properties. In particular, we will show that $\calR_{\rho}(\mu, \nu)$ may be approximated up to additive $\eps r$-error for datasets of diameter at most $r$ in time which is near-linear (for small $\rho$ close to $1$). We view $\rho$ as introducing a new ``knob'' for the Earth Mover's distance: as $\rho \to 1$, the metrics $\calR_{\rho}(\cdot,\cdot)$ approach $\EMD(\cdot,\cdot)$; however, for $\rho$ close to (but not too close to) 1, very fast algorithms with accurate approximations are possible. Thus, our new algorithm gives a positive answer to Question~\ref{q:one}. Namely, if one is willing to change the problem slightly, one can achieve the approximation guarantees of Sinkhorn distances with the running times like the Earth Mover's distance.

While Question~\ref{q:two} is inherently vague, such techniques are known in a related algorithmic context. One of our main conceptual contributions is drawing a connection to \emph{kernel density estimation} \cite{CS17, BCIS18, SRBCL19, CKNS20, BIMW21, BIKSZ22}. 
The algorithms developed in that context use locality-sensitive hashing and embeddings, but are still able to output $(1\pm \eps)$-approximations. In particular, a key feature of those works is that the distortion incurred by locality-sensitive hashing and embeddings factors into the running time of the algorithm and not the final approximation. In summary, our main conceptual contributions may be summarized as follows:
\ignore{From a technical perspective, the novel aspect is focusing attention on a different formulation which will imply the new results for $\EMD$. We will define a version of the Earth Mover's distance which optimizes an $\ell_{\rho}$-regression version of the problem (see Definition~\ref{def:ell-rho-regression}). The case $\rho = 1$ will correspond exactly to $\EMD$, but setting $\rho$ close to, but greater than, $1$ will give a (relatively) minor perturbation of $\EMD$, and at the same time be algorithmically much simpler. As far as we can tell, the $\ell_{\rho}$-regression version of $\EMD$ has not been studied before, and the algorithms we provide (which will achieve a type of $(1\pm \eps)$-approximations) novel, even when specialized to $\ell_1$ and $\ell_2$. Our main conceptual contribution can be summarized as follows:}
\begin{itemize}
\item There exists a class of optimal transport metrics parametrized by $\rho$ which smoothly perturb the Earth Mover's distance (approaching $\EMD$ as $\rho \to 1$).
\item For a small setting of $\rho > 1$, these problems can be optimized in significantly sub-quadratic time to arbitrarily accurate additive approximations for bounded datasets.
\end{itemize}
We believe the new problem formulation and the ideas behind the algorithm will lead to improvements in practical algorithms for optimal transport metrics. We emphasize that there are no algorithmic approaches that achieve $(1\pm \eps)$-approximations or $\eps r$-additive approximations for either $\EMD$ nor $\SNK$ in time $n^{1.99}$. In addition, there is some reason to believe that this may be impossible for $\EMD$~\cite{R19}. By changing the problem and allowing a small additive error, we avoid the large constant factors. \ignore{In cases where one expects $\EMD$, $\SNK$ and $\calR_{\rho}$ to be similar, the algorithms may be especially useful (for instance, when optimal couplings $\gamma$ already have costs tend to be ``spread'' ). \red{Moses: do we know some class of instances where this happens?}} We also suggest looking at Section 4 of~\cite{BDIRW20}, who group algorithms by their running times; the new techniques achieve the accurate approximations of the ``quadratic time'' algorithms, even though they run much faster (at least in theory). \ignore{\red{Moses: The paper actually doesn't use the word ``regimes''. Instead, should we say: We also suggest looking at Section 4 of~\cite{BDIRW20}, who group algorithms by their running time; the new techniques achieve the accurate approximations of the ``quadratic time'' algorithms, even though they could (at least in theory) run much faster.}}

\paragraph{Outline.} The next section gives the new objective $\calR_{\rho}(\mu, \nu)$ and states our main Theorem~\ref{thm:main}. We will overview the components of the proof in the next section. Then, we give a description of the main algorithm while assuming algorithms for estimating the gradients and the penalty term.

\newcommand{\rmin}{r_{\min}}

\section{The Definition of $\ell_{\rho}$-Optimal Transports}

For any dimension $d \in \N$, let $\mu$ and $\nu$ denote two discrete distributions supported on $n$ and $m$ point masses in $\R^d$, respectively. More specifically, $\mu$ is specified by $n$ points $x_1,\dots,x_n \in \R^d$ and corresponding weights $\mu_1,\dots, \mu_n \in \R_{> 0}$ where $\sum_{i=1}^n \mu_i = 1$, and $\nu$ is specified by $m$ points $y_1,\dots, y_m \in \R^d$ with the corresponding weights $\nu_1,\dots, \nu_m \in \R_{> 0}$ with $\sum_{i=1}^m \nu_i = 1$ (note that we can always assume that $\mu_i$ and $\nu_j$ are strictly positive by a linear-time scan which can remove points of weight-$0$). One ought to think of $d = \omega(\log n)$, so we seek algorithms which overcome the ``curse of dimensionality'' and do not have running times which scale exponentially in $d$. 

For any parameter $\rho > 1$, we seek to optimize the following objective, which will specify a metric space over probability distributions which relax the optimal transport problem (Lemma~\ref{lem:triangle} in Appendix~\ref{sec:basic-prop}):
\begin{align}
\calR_{\rho}(\mu, \nu) &= \min \left\{ \left( \Ex_{\substack{\bi \sim \mu \\ \bj \sim \nu}}\left[\left( \frac{\gamma_{\bi\bj}}{\mu_{\bi} \nu_{\bj}} \cdot \|x_{\bi}- y_{\bj}\|_2\right)^{\rho}\right] \right)^{1/\rho}: \gamma \text{ is a coupling of $\mu$ and $\nu$} \right\}. \label{eq:objective}
\end{align}
In words, for any coupling $\gamma$ between the distributions $\mu$ and $\nu$, one may associate an $n m$-dimensional vector encoding the costs associated with each point-mass. Each point $x_i$ from $\mu$ and $y_j$ from $\nu$, the coupling $\gamma$ transports $\gamma_{ij}$ ``mass'' from $x_i$ to $y_j$ and pays a function of the distance between $x_i$ and $y_i$ times $\gamma_{ij} / (\mu_i \nu_j)$. In $\calR_{\rho}(\mu, \nu)$, we optimize the normalized $\ell_{\rho}$-norm of the cost vector. Notice that, when $\rho = 1$, $\calR_{\rho}(\mu, \nu)$ is the Earth Mover's distance distance between $\mu$ and $\nu$. As we vary $\rho \geq 1$, one may relate the $\ell_{\rho}$- and $\ell_1$-norm, implying
\begin{align*}
\EMD(\mu, \nu) \leq  \calR_{\rho}(\mu, \nu) \leq \sup_{i, j} \left|\frac{1}{\mu_i \nu_j}\right|^{(\rho-1)/\rho} \EMD(\mu, \nu).
\end{align*}
When $\rho > 1$, we will obtain a sequence of (as we will see) computationally easier metric spaces which approach $\EMD(\mu, \nu)$. The key is that performing this modification will allow for significantly faster algorithms in terms of $n$ and $m$ (the number of points), while having a dependence on $\rho$ which will be $2^{O(\rho / (\rho - 1))}$. 

We view $\rho > 1$ as a desired computational ``knob,'' which allows one to tradeoff the running time of an algorithm and the metric's relation to $\EMD$. Note that, in a $c$-approximation algorithm for $\EMD$, $c$ also trades-off faster/slower running times for looser/tighter relations to $\EMD$. The difference, however, is that for any $\rho > 1$, $\calR_{\rho}(\cdot,\cdot)$ is still a metric space over probability distributions (and the same cannot be said of a $3$-approximation to $\EMD$). The specific choice of metric space ($\EMD$, Wasserstein-$p$, or $\SNK_{\eta}$) is oftentimes flexible, so long as it captures the desired notion of similarity/dissimilarity of distributions. The hope is that for moderate values of $\rho$, $\calR_{\rho}(\mu, \nu)$ suffices for downstream applications, and captures the desired properties of an optimal-transport $\gamma$.

From a more technical perspective, (\ref{eq:objective}) encourages couplings $\gamma$ whose contribution to the cost vector is ``spread'', so that the $\ell_{\rho}$-norm will be small. The main advantage is that, using a connection to recent work on kernel density estimation in high-dimensions \cite{BCIS18} and scaling approaches to entropy regularized optimal transport \cite{C13,AWR17}, we give very efficient (and simple) algorithms for approximating $\calR_{\rho}(\mu, \nu)$.

\paragraph{Notation for Running Time Bounds.}\label{sec:notation} We will use the following notation in order to describe the running time bounds. The focus is on improving on the dependence on $n$ and $m$ when estimating optimal transports, so we use the notation $\poly^*(f)$ to denote a fixed polynomial function of $f$, and which hides poly-logarithmic factors $n, m$, $\delta$ (the failure probability), $\eps$ (the accuracy) and $r$ (the radius of the dataset). In addition, since we will incur a polynomial dependence on $\eps$, we will automatically apply the Johnson-Lindenstrauss lemma and assume that $d = O(\log(nm) / \eps^2)$. 

\begin{theorem}\label{thm:main}
There exists a randomized algorithm with the following guarantees. The algorithm receives as input
\begin{itemize}
\item Two sets of points $\{x_1,\dots, x_n \}$ and $\{ y_1,\dots, y_m \}$ in $\R^d$ where the maximum pairwise distance between points $\sup_{i, j} \|x_i - y_j\|_2 \leq r$.
\item Two vectors $\mu \in \R^n_{\geq 0}$ and $\nu \in \R^m_{\geq 0}$ whose coordinates sum to $1$ and encode the distributions over $\{x_1,\dots, x_n \}$ and $\{y_1,\dots, y_n\}$, respectively.
\item An accuracy parameter $\eps > 0$, a failure probability $\delta > 0$, and a parameter $\rho \in [1,2]$.
\end{itemize}
The algorithm runs in time $(n + m) \cdot \poly^*((nm)^{(\rho-1)/\rho} \cdot 2^{\rho/(\rho-1)} / \eps)$, and outputs an estimate $\hat{\boldeta} > 0$ which satisfies
\begin{align*}
\left| \hat{\boldeta} - \calR_{\rho}(\mu, \nu) \right| \leq \eps \cdot r
\end{align*}  
with probability at least $1-\delta$.
\end{theorem}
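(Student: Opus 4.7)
The plan is to attack $\calR_{\rho}(\mu, \nu)$ through its $\rho$-th power, which is a convex optimization problem in the coupling $\gamma$. Writing $c_{ij} = \|x_i - y_j\|_2$, we have
$$
\calR_{\rho}(\mu, \nu)^\rho \;=\; \min_\gamma \; \sum_{i,j} \frac{c_{ij}^\rho}{(\mu_i \nu_j)^{\rho - 1}} \, \gamma_{ij}^\rho,
$$
where the minimum ranges over couplings of $\mu$ and $\nu$. I would perturb this by a small entropy regularizer $-\eta H(\gamma)$ with $\eta$ a suitable polynomial in $\eps/r$, so that the regularized optimum lies within additive $\eps r / 2$ of $\calR_{\rho}(\mu,\nu)$ after taking the final $\rho$-th root. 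Controlling this shift through a $\rho$-th root is already where a $\rho/(\rho-1)$-type exponent shows up.

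Next, I would solve the regularized program by Sinkhorn-style matrix scaling. The KKT conditions express the optimal $\gamma_{ij}$ as a monotone function $h_\eta(\alpha_i + \beta_j, c_{ij})$ of row/column duals $\alpha \in \R^n, \beta \in \R^m$, and one alternately rescales $\alpha$ and $\beta$ to enforce the marginal constraints $\mu_i, \nu_j$. A standard strongly-convex mirror-descent/Bregman-projection analysis should yield $\poly^*(1/\eps)$ outer iterations to push the marginals within tolerance. Each rescaling step reduces, for every column $j$, to evaluating a weighted sum
$$
S_j \;=\; \sum_i \mu_i \cdot h_\eta(\alpha_i + \beta_j, c_{ij})
$$
(and symmetrically over rows), followed by a one-dimensional root-finding in $\beta_j$ which only needs oracle access to $S_j$ at $\poly^*(1/\eps)$ values.

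The algorithmic heart of the argument is that each $S_j$ is a kernel density sum over the dataset $\{x_1,\dots,x_n\}$ evaluated at $y_j$, with a kernel determined by $h_\eta$ and the current dual values. After discretizing the duals into $\poly^*(1/\eps)$ buckets and applying Johnson--Lindenstrauss to reduce the dimension to $O(\log(nm)/\eps^2)$, each $S_j$ becomes a constant number of classical Euclidean KDE queries (Gaussian, exponential, or distance-power kernels), for which data structures of the type in~\cite{BCIS18, CKNS20} return $(1 \pm \eps')$ approximations in $\poly^*((nm)^{(\rho-1)/\rho}/\eps')$ time per query after $(n+m) \cdot \poly^*(1/\eps')$ preprocessing. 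Summing over rows, columns, and iterations matches the claimed $(n+m)\cdot \poly^*((nm)^{(\rho-1)/\rho} \cdot 2^{\rho/(\rho-1)} / \eps)$ running time; the final penalty $\sum_{ij} c_{ij}^\rho \gamma_{ij}^\rho / (\mu_i\nu_j)^{\rho-1}$ needed to report $\hat{\boldeta}$ is handled by one more invocation of the same KDE subroutine.

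The hardest part, I expect, will be error propagation. KDE returns only multiplicative $(1\pm \eps')$ approximations, and these must survive the matrix-scaling iteration (whose convergence theory is usually stated for exact updates) and then the final $\rho$-th root. Because the $\ell_\rho$ analogue of the Sinkhorn update is a power-type rather than exponential-type function, a relative error $\eps'$ in $S_j$ translates into a relative error of roughly $\eps' \cdot \rho/(\rho-1)$ in $\beta_j$ and thus in $\gamma_{ij}$; compounded through $\poly^*(1/\eps)$ iterations and the final $\rho$-th root rescaling, this is what forces KDE accuracy to be $\eps'$ exponentially small in $\rho/(\rho-1)$, producing the $2^{O(\rho/(\rho-1))}$ factor. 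Balancing $\eta$, $\eps'$, and the iteration count against this amplification so as to land within additive $\eps r$ of $\calR_\rho(\mu,\nu)$ is where the most delicate bookkeeping will be required.
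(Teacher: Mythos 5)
Your high-level architecture (work with $\calR_\rho^\rho$, pass to dual variables $\alpha,\beta$, and compute the per-row/per-column sums via kernel density estimation) is the right one, and your primal formula $\sum_{ij} c_{ij}^\rho \gamma_{ij}^\rho/(\mu_i\nu_j)^{\rho-1}$ is correct. But the specific route you choose has a gap at its core: you add an entropy regularizer and run Sinkhorn scaling, which makes the relevant kernel exponential in the distance (Gaussian/Laplacian type). The fast $(1\pm\eps')$-approximate KDE data structures with query time $\poly^*(2^{s}/\eps')$ that the claimed running time needs are the ones of~\cite{BCIS18} for \emph{smooth} kernels, i.e.\ kernels decaying polynomially in $\|x-y\|_2$; for exponential-type kernels the known guarantees (e.g.~\cite{CKNS20}) degrade with the minimum kernel value, and with $\eta = \poly(\eps/r)\cdot r$ that value is $e^{-\poly(1/\eps)}$, so the step ``each $S_j$ becomes a constant number of classical Euclidean KDE queries'' does not deliver the stated time bound. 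The paper avoids this entirely by \emph{not} regularizing: the exact Lagrangian dual of $\calR_\rho^\rho$ is already an unconstrained concave maximization whose penalty term is $\sum_{ij}\mu_i\nu_j\bigl((\alpha_i-\beta_j)^+/\|x_i-y_j\|_2\bigr)^{s}$ with $s=\rho/(\rho-1)$ the H\"older conjugate, so the kernel that appears in the partial derivatives is the power-law kernel $1/\|x_i-y_j\|_2^{s}$, exactly a $(1,s)$-smooth kernel in the sense of~\cite{BCIS18}. This is also where the $2^{O(\rho/(\rho-1))}$ factor actually comes from (the KDE query complexity for $(1,s)$-smooth kernels), not from error amplification across iterations as you conjecture.

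Two further pieces are missing from your sketch and are where most of the paper's work lives. First, the sums you need are not plain KDE sums: they carry the dual-value weights $((\alpha_i-\beta_j)^+)^{s-1}$ inside, and the query point's own dual $\beta_j$ enters nonlinearly, so bucketing the duals in advance does not suffice; the paper builds an augmented data structure (a binary tree sorted by $\alpha$ with a KDE structure at each node, queried at a random threshold $\beta+\bxi^{1/(s-1)}$) to get an unbiased, low-variance estimator of the weighted sum. Second, your error-propagation worry is real but is resolved differently: rather than pushing multiplicative KDE errors through a Sinkhorn convergence analysis, the paper uses the (approximate) gradients only to choose the \emph{sign} of a fixed-step update and to test a termination condition, so the dual objective increases by $\Omega(\lambda\eps_2)$ per step and errors never compound; a preprocessing step that lower-bounds all distances by $\sigma r$ and all masses by $\sigma_\mu/n$, $\sigma_\nu/m$ is what keeps the penalty term smooth enough for this to work. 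Without some substitute for these two mechanisms, your proposal does not yet constitute a proof.
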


The main advantage of Theorem~\ref{thm:main} is that it does not pay the quadratic $nm$-factor in the running time and at the same time obtains accurate approximations. In particular, suppose we consider a setting of $\rho$ which is $\rho = 1 + 1/\sqrt{\log(nm)}$, then the corresponding running time of Theorem~\ref{thm:main} to approximate $\calR_{\rho}(\mu, \nu)$ up to an additive $\pm \eps r$ becomes
\[ (n+m)^{1+o(1)} \cdot \poly(1/\eps). \]
Generally, as $\rho$ becomes close to $1$, the metric $\calR_{\rho}(\cdot,\cdot)$ approaches $\EMD(\cdot,\cdot)$ and the dependence on $n$ and $m$ becomes better, since $(n+m) \cdot (nm)^{O((\rho-1)/\rho)}$. However, one does not want to set $\rho$ to be too close to $1$, since the factor of $2^{O(\rho / (\rho-1))}$ may begin to dominate. 

\begin{remark}[Challenges when $\rho \to 1$]
In order to use $\calR_{\rho}(\cdot, \cdot)$ to approximate $\EMD(\cdot,\cdot)$ up to $(1+\eps)$-factor, one would need to set $\rho$ to roughly $1 + O(\eps / \log(nm))$; however, this approach runs into a technical challenge. There is a concrete sense in which the parameter $\rho \geq 1$ adds a certain ``smoothness'' which is not present in $\EMD$. At a very high level, we show that an additive approximation of $\calR_{\rho}$ reduces to queries for ``smooth'' kernel density evaluation~\cite{BCIS18} which suffer an exponential dependence on $\rho / (\rho - 1)$. With $\rho = 1 + O(\eps / \log(nm))$, this dependence would become $(nm)^{O(1/\eps)}$---worse than the $(nm)^{1+o(1)}$ time required from prior work.
\end{remark}

\subsection{Proof of Theorem~\ref{thm:main} Overview}

We overview the major components of the proof of Theorem~\ref{thm:main}. While (relatively minor) technical challenges arise when fleshing out the details, the structure and algorithm proceed with the following plan.

\paragraph{The Duals of $\EMD(\mu,\nu)$ and $\calR_{\rho}(\mu, \nu)^{\rho}$.} The challenge in optimizing $\calR_{\rho}(\mu,\nu)^{\rho}$ (which also appears in $\EMD(\mu, \nu)$) is that an algorithm cannot even write down the explicit description of the optimization, nor can it explicitly maintain a coupling $\gamma$, since this requires $\Omega(nm)$ values. On the other hand, both $\EMD(\mu, \nu)$ and $\calR_{\rho}(\mu,\nu)^{\rho}$ only have $n+m$ equality constraints, so the duals are maximization problems over $n+m$ variables (one for each constraint). The approach will be to show that, using data structures for kernel density estimation, we can implicitly maximize the dual of $\calR_{\rho}(\mu, \nu)^{\rho}$ while only maintaining the $n+m$ dual variables.

To see the connection, we first write down the dual for $\EMD(\mu,\nu)$, which has $n+m$ variables $\alpha_1,\dots, \alpha_n$ and $\beta_1,\dots, \beta_m$ and asks to maximize
\begin{align}\label{eq:emd-dual}
\EMD(\mu,\nu)= \max_{\substack{\alpha \in \R^n \\ \beta \in \R^m}}\left\{ \sum_{i=1}^n \mu_i \alpha_i - \sum_{j=1}^m \nu_j \beta_j  :    \forall (i, j) \in [n] \times [m], \alpha_i - \beta_j \leq \| x_i - y_j \|_2\right\}.
\end{align}
For $\rho > 1$, the H\"{o}lder conjugate $s >1$, is the number satisfying $1/\rho + 1/s = 1$. The dual for $\calR_{\rho}(\mu, \nu)^{\rho}$ is the following unconstrained maximization problem on $n+m$ variables $\alpha_1,\dots, \alpha_n$ and $\beta_1,\dots, \beta_m$,
\begin{align}\label{eq:ell-rho-dual}
\calR_{\rho}(\mu, \nu)^{\rho} = \max_{\substack{\alpha \in \R^n \\ \beta \in \R^m}}\left\{ \sum_{i=1}^n \mu_i \alpha_i - \sum_{j=1}^m \nu_j \beta_j - \frac{1}{s} \left( 1 - \frac{1}{s}\right)^{s-1} \sum_{i=1}^n \sum_{j=1}^m \mu_i \nu_j \left( \dfrac{(\alpha_i-\beta_j)^+}{\|x_i - y_j\|_2}\right)^{s} \right\},
\end{align}
where we consider $\frac{0}{0} = 0$, {and $(\alpha_i - \beta_j)^+$ is $\alpha_i - \beta_j$ if positive and $0$ otherwise}. Note the difference: in (\ref{eq:emd-dual}), there are $nm$ hard constraints which enforce $(\alpha_i - \beta_j)^+ / \| x_i - y_j \|_2 \leq 1$ for every $i \neq j$. In (\ref{eq:ell-rho-dual}), the $nm$ constraints are relaxed. The optimization is allowed to set $\alpha_i - \beta_j$ larger than $\|x_i - y_j \|_2$, but pays a penalty in the objective proportional to $((\alpha_i - \beta_j)^+ / \|x_i - y_j \|_2)^{s}$. As $\rho$ gets closer to $1$, the H\"{o}lder conjugate $s$ becomes larger, and the penalty becomes more pronounced. For simplicity in the notation, we will write
\[ g(\alpha, \beta) \eqdef \sum_{i=1}^n \mu_i \alpha_i - \sum_{j=1}^m \nu_j \beta_j - \frac{1}{s} \left(1 - \frac{1}{s} \right)^{s-1} \sum_{i=1}^n \sum_{j=1}^m \mu_i \nu_j \left(\frac{(\alpha_i - \beta_j)^+}{\|x_i - y_j\|_2} \right)^s. \]

\paragraph{Partial Derivatives via Kernel Density Estimation} Since (\ref{eq:ell-rho-dual}) is a concave maximization problem, a simple approach is to simulate a gradient ascent algorithm on the dual variables $\alpha_1,\dots, \alpha_n$ and $\beta_1,\dots, \beta_m$, where we update in the direction of the partial derivatives. The partial derivatives with respect to $\alpha_i$ and $\beta_j$ are given by
\begin{align}
\frac{\partial g}{\partial \alpha_i} &= \mu_i\left(1  - \left(1 - \frac{1}{s} \right)^{s-1} \sum_{j=1}^m \nu_j \cdot \dfrac{((\alpha_i - \beta_j)^+)^{s-1}}{\|x_i - y_j \|_2^s} \right) \label{eq:partial-alpha}\\
\frac{\partial g}{\partial \beta_j} &= -\nu_j\left(1 - \left(1 - \frac{1}{s} \right)^{s-1} \sum_{i=1}^n \mu_i\cdot \dfrac{((\alpha_i - \beta_j)^+)^{s-1}}{\|x_i - y_j \|_2^s} \right). \label{eq:partial-beta}
\end{align}
Importantly, the partial derivatives depend on $\mu_i$ and $\nu_j$ and a weighted sum of $1/\|x_i - y_j\|_2^s$. First, note that we receive $\mu$ and $\nu$ as input, so $\mu_i$ and $\nu_j$ are $n+m$ constants throughout the execution. The weighted sums are the more challenging parts, and for these we use the kernel density estimation data structures. We interpret $\sfK(x_i, y_j) = 1 / \|x_i - y_j \|_2^s$ as a ``smooth'' kernel, similar to the Student-$t$ Kernel studied in \cite{BCIS18}. These smooth kernels decay polynomially as a function of the distance $\| \cdot\|_2$ and admit very efficient data structures. Specializing the results of \cite{BCIS18} for $\sfK$, they give data structures which preprocess a set of points $P$ and can support $(1\pm \eps)$-approximate kernel evaluation queries of the form $\sum_{x \in P} \sfK(x, y)$ for any $y \in \R^d$. The query complexity is $\poly^*(2^s / \eps)$ and $s$ becomes $\rho / (\rho - 1)$. In order to use these for Theorem~\ref{thm:main}, we incorporate the weights $((\alpha_i - \beta_j)^+)^{s-1}$ by augmenting those data structures in Section~\ref{sec:data-structures} (we overview the augmentations shortly). Once this is done, the algorithm can initialize $\alpha \in \R^n$ and $\beta \in \R^m$ to $0$ and effectively update $\alpha$ and $\beta$ in the directions of the partial derivatives in order to increase the objective function.

The only remaining challenge is setting the step size of the update, and ensuring that the function is smooth enough. Note that because of the non-linear penalty term, there is no global Lipschitz constant, but we will argue that our optimization always remains within a smooth enough region if the step size is set appropriately. We do this final argument by applying a simple preprocessing step. The preprocessing will guarantee that the distance between any $x_i$ and $y_j$ is always between $\eps r$ and $r$ (which changes $\calR_{\rho}(\mu, \nu)$ by at most $\eps r$), and that every non-zero element of the support of $\mu$ and $\nu$ is sampled with at least some probability. This means that an update which changes some $\alpha$ or $\beta$ does not change the penalty term significantly (because the fact that the distance $\|x_i -y_j\|_2$ in the denominator is at least $\eps r$ ensures the penalty does not blow up).

\paragraph{Augmenting Kernel Density Estimates to Incorporate Weights} For $s > 1$, we want to maintain a set of points $P = \{ x_1,\dots, x_{n} \}$ in $\R^d$, where each point is associated with a weight $\alpha_1,\dots, \alpha_n \in \R$ and a parameter $\mu_1,\dots, \mu_i$ which are between $1/\poly(n)$ and $1$. A query is specified by another vector $y \in \R^d$ and its weight $\beta$, and the task is to output
\begin{align}
\sum_{i=1}^{n} \mu_i \cdot ((\alpha_i - \beta)^+)^{s-1} \cdot \sfK(x_i, y), \label{eq:kde-query}
\end{align}
where $\sfK(x_i, y) = 1 / \|x_i - y\|_2^{s}$. We will augment the data structures from \cite{BCIS18} as follows. First, partition $P$ into $O(\log n /\eps)$ ranges which partition $[1/\poly(n), 1]$ according to powers of $1+\eps$ so as to assume that $\mu_j$ is the same within each range. Note that we know the weights $\mu_1,\dots, \mu_n$ during the preprocessing, so that we may perform this partition; however, since we do not know $\beta$ during the preprocessing, we cannot similarly partition according to the value of $(\alpha_i - \beta)^s$. 

Instead, we will proceed with the following. For each range $j$, the resulting set $P_j$ is stored sorted in a binary tree according to the weights $\alpha$, and let $\alpha_{\max}$ be the largest weight. Each internal node holds a data structure of~\cite{BCIS18} maintaining points in its subtree. When a query $(y, \beta) \in \R^d \times \R$ comes, one may perform the following:
\begin{enumerate}
\item Let $\bxi$ be uniformly drawn from the interval $[0, (\alpha_{\max} - \beta)^{s-1}]$.
\item Find the value $\beta + \bxi^{1/(s-1)}$ in the binary tree, and we consider the $k = O(\log n)$ nodes which partition the interval $[\beta + \bxi^{1/(s-1)}, \alpha_{\max}]$.
\item Query all $k$ kernel evaluation data structures stored at those nodes. If $\hat{\boldeta}_1,\dots, \hat{\boldeta}_k$ are the estimates output by the $k$ data structures with $y$, output $ (\alpha_{\max} - \beta)^{s-1} \sum_{\ell=1}^k \hat{\boldeta}_{\ell}$.
\end{enumerate}
The main observation is that the sampling automatically incorporates weights. For example, suppose the data structures of \cite{BCIS18} were exact, then our estimate is an unbiased estimator of (\ref{eq:kde-query}):
\begin{align*}
\Ex_{\bxi}\left[ (\alpha_{\max} - \beta)^{s-1} \sum_{\ell=1}^k \hat{\boldeta}_{\ell} \right] &= \sum_{i=1}^{m} (\alpha_{\max} - \beta)^{s-1} \cdot \Prx_{\bxi}\left[ \alpha_i \geq \beta + \bxi^{1/(s-1)} \right] \cdot \sfK(x_i, y),
\end{align*}
and the probability that $\alpha_i \geq \beta + \bxi^{1/(s-1)}$ is exactly $((\alpha_i - \beta)^+)^{s-1} / (\alpha_{\max} - \beta)^{s-1}$. The variance of the above estimation is too large (which occurs because $\alpha_{\max} \gg \beta$), so we make the following minor modification. We partition the interval $[\beta, \alpha_{\max}]$ into poly-logarithmic, geometrically increasing groups, and perform the above process for each group. This is then enough to bound the variance.

\section{A Gradient Ascent Algorithm}\label{sec:sgd}

\newcommand{\err}{\mathrm{err}}
\newcommand{\EstAlpha}{\texttt{Est-Alpha}}
\newcommand{\EstBeta}{\texttt{Est-Beta}}
\newcommand{\EstPenalty}{\texttt{Est-Penalty}}
\newcommand{\bomega}{\boldsymbol{\omega}}

\subsection{A Simple Preprocessing}\label{sec:preprocess}

Before we give the description of the algorithm, we will run a simple preprocessing step which simplifies our input. We will think of $\mu$ and $\nu$ as the distribution over $\{x_1,\dots, x_n \}$ and $\{y_1,\dots, y_m\}$, respectively. For small parameters $\sigma, \sigma_{\mu}, \sigma_{\nu} > 0$, we will define the distributions $\mu'$ and $\nu'$ in the following way:
\begin{itemize}
\item First, we consider the points $x_1',\dots, x_n'$ and $y_1', \dots, y_m'$ in $\R^{d+1}$ where we append a coordinate and we let $x_i' = (x_i, \sigma r)$ and $y_j' = (y_j, 0)$. This way, we guarantee that for every $i \in [n]$ and $j \in [m]$, we satisfy $\sigma r \leq \| x_i' - y_j' \|_2 \leq r \sqrt{1 + \sigma^2}$ (where the upper bound follows from the fact $\|x_i - y_j\|_2 \leq r$).
\item We define the sets $L_{\mu} \subset [n]$ and $L_{\nu} \subset [m]$ for the indices of $\mu$ and $\nu$ which have low probability, i.e., $L_{\mu} = \{ i \in [n] : \mu_i < \sigma_{\mu} / n \}$ and $L_{\nu} = \{ j \in [m] : \nu_j < \sigma_{\nu} / m\}$. We denote $\zeta_{\mu} = \sum_{i \in L_{\mu}} \mu_i \leq \sigma_{\mu}$  and $\zeta_{\nu} = \sum_{j \in L_{\nu}} \nu_j \leq \sigma_{\nu}$. The distribution $\mu'$ is supported on the points $x_1',\dots, x_n'$, and $\nu'$ is supported on the points $y_1',\dots, y_n'$ given by
\[ \mu_i' = \left\{ \begin{array}{cc} 0 & i \in L_{\mu} \\ 
						\mu_i / (1-\zeta_{\mu}) & i \in [n] \setminus L_{\mu} \end{array} \right. \qquad\text{and}\qquad \nu_j' = \left\{ \begin{array}{cc} 0 & j \in L_{\nu} \\
											   \nu_j / (1-\zeta_{\nu}) & j \in [m] \setminus L_{\nu}\end{array} \right. .  \]
\end{itemize}

The above transformations has the benefit that we now have a lower bound on the minimum distance between any point from the support of $\mu'$ and any point from the support of $\nu'$, while only increasing the maximum distance by at most a factor of $\sqrt{1+\sigma^2}$. Furthermore, the distributions $\mu'$ and $\nu'$ have all elements of their support with probability at least $\sigma_{\mu} / n$ and $\sigma_{\nu} / m$, respectively, since we have removed the low-probability items. Thus, the algorithm below will apply the above perturbation, and we may assume throughout the execution the corresponding properties of $\mu$ and $\nu$. Note that as long as we ensure the parameter 
\[ \left( n^{\rho-1} \cdot \frac{\sigma}{\sigma_{\mu}^{\rho-1}} + \sigma_{\mu} \right)^{1/\rho} \leq \eps \qquad\text{and}\qquad \sigma_{\nu}^{1/\rho} \leq \eps,\]
then by the triangle inequality, we will have $\calR_{\rho}(\mu', \nu')$ is up to an additive $2\eps r$, the same as $\calR_{\rho}(\mu, \nu)$. In particular, we can let $\sigma_{\nu}$ be $\eps^{\rho}$ and $\sigma_{\mu} = \eps^{\rho} / n$ and $\sigma = \eps^{\rho}$.

\subsection{Description of the Algorithm}\label{sec:alg}

We will assume hence-forth that our input distributions $\mu$ and $\nu$, whose support is $\{ x_1,\dots, x_n \}$ and $\{y_1,\dots, y_n\}$ satisfy:
\begin{itemize}
\item Every $i \in [n]$ and $j \in [m]$, the distance $\|x_i - y_j\|_2$ is always between $\sigma r$ and $r$ (for a small parameter $\sigma > 0$, we have $r \sqrt{1 + \sigma^2} \leq 2r$ so, in order to simplify the notation, one may think of $\sigma$ as being decreased by a factor of $2$).
\item The distributions $\mu$ and $\nu$ have a ``granularity'' property, so that every $i \in [n]$ for which $\mu_i$ is non-zero is at least $\sigma_{\mu} / n$, and every $j \in [m]$ for which $\nu_j$ is non-zero is at least $\sigma_{\nu} / m$. This will allow us to upper bound $1/(\mu_i \nu_j) \leq mn / (\sigma_{\mu} \sigma_{\nu})$.
\end{itemize}

The algorithm will maintain a setting of the dual variables $(\alpha_t, \beta_t) \in \R^{n+m}$ which it will update in each iteration, and it will seek to maximize
\[ g(\alpha, \beta) = \sum_{i=1}^n \mu_i \alpha_i - \sum_{j=1}^m \nu_j \beta_j - C_s \sum_{i=1}^n \sum_{j=1}^m \mu_i \nu_j \left(\frac{(\alpha_i -\beta_j)^+}{\|x_i - y_j\|_2} \right)^{s}. \]
In the description of the algorithm below, we will assume access to three sub-routines $\EstAlpha$, $\EstBeta$, and $\EstPenalty$ which we specify later (see Subsection~\ref{sec:sub-routine-guarantees} for a description of the guarantees). At a high level, the sub-routines $\EstAlpha$ will help us get an approximation of the gradient $\nabla g(\alpha_t, \beta_t)$ along directions in $\alpha$, and $\EstBeta$ will help us get an approximation of the gradient $\nabla g(\alpha_t, \beta_t)$ along directions in $\beta$. The sub-routine $\EstPenalty$ will come in at the end, since we will need to estimate the ``penalty'' term in order to output an approximation to $g(\alpha_t, \beta_t)$. We will instantiate the algorithm with the following parameters:
\begin{itemize}
\item \textbf{Accuracy of Terminating Condition}: we denote this parameter $\eps_2 > 0$, which will be set to $c_0 \cdot \eps \cdot \left(\sigma_{\mu} \sigma_{\nu} / (mn)\right)^{(\rho-1)/\rho}$ for a small enough constant $c_0 > 0$. This parameter will dictate when our algorithm has found a dual solution which is close enough to the optimal one.
\item \textbf{Accuracy for Estimation}: There are two parameters which specify the accuracy needed in the estimations $\EstAlpha$ and $\EstBeta$. We let $\eps_1 > 0$ denote the multiplicative error bound which we will tolerate, set to $c_1 \eps_2 / s$ for a small enough constant $c_1$, and $\tau$ which will be an additive error bound will may be interpreted as a granularity condition on the weights $\alpha, \beta$. It will suffice to set $\tau = c_2 \eps_2$, but the final dependence on $\tau$ will be poly-logarithmic in $1/\tau$, the notation $\poly^*(\cdot)$ will suppress it.
\item \textbf{Step Size of Gradient Ascent}: The parameter $\lambda \geq 0$ will denote the step size of our gradient ascent algorithm. We set $\lambda = c_3 \eps_2 \cdot (\sigma / s)^2 \cdot r^{\rho}$, for a small constant $c_3 > 0$. 
\end{itemize}
We will also consider a small enough parameter $\delta > 0$ which will denote the failure probabilities of our estimation algorithms. The final dependence on $\delta$ is only poly-logarithmic, so it will suffice to set $\delta$ to be a small enough polynomial factor of all parameters of the algorithm (i.e., $n,m,1/\eps, 1/\sigma, s$) such that all executions of $\EstAlpha$, $\EstBeta$, and $\EstPenalty$ succeed with high probability. For simplicity in the notation, we will drop $\delta$ from the notation, and assume that all executions of our (randomized) sub-routines succeed. 

\begin{figure}
	\begin{framed}
		\noindent Main Algorithm for Computing $\calR_{\rho}(\mu, \nu)^{\rho}$ (after preprocessing from Subsection~\ref{sec:preprocess}).
		\begin{flushleft}
			\noindent {\bf Input:}  Two vectors $\mu \in \R^n$ and $\nu \in \R^m$ which encode two distributions supported on the points $\{x_1,\dots, x_n\}$ and $\{y_1,\dots, y_m \} $ in $\R^d$, and an accuracy parameter $\eps > 0$. \\
			\noindent {\bf Assumptions:} Every $i \in [n]$ and $j \in [m]$ satisfies $\sigma r \leq \|x_i - y_j \|_2 \leq r$. Every $i \in [n]$ has $\mu_i \geq \sigma_{\mu} / n$ and every $j \in [m]$ has $\nu_j \geq \sigma_{\nu} / m$. We refer to parameters $\eps_1, \eps_2, \tau$ and $\lambda$ specified above (as a function of $\eps$), and access to sub-routines $\EstAlpha$, $\EstBeta$, and $\EstPenalty$.\\ \vspace{0.5cm}
			\noindent We initialize $(\alpha_0, \beta_0) = (0, 0) \in \R^{n +m}$ and iteratively perform the following updates for $t \geq 1$:
		\begin{itemize}
\item \textbf{Run Estimates}: We execute $\EstAlpha(\alpha_t, \beta_t, \eps_1, \tau)$ which produces as output a sequence of $n$ numbers $\boldeta_1,\dots, \boldeta_n$, and we execute $\EstBeta(\alpha_t, \beta_t, \eps_1, \tau)$ which returns a sequence of $m$ numbers $\bxi_1,\dots, \bxi_m$.
\item \textbf{Update $\alpha$'s}: If $\sum_{i=1}^n \mu_i |1 -  \boldeta_i| \geq \eps_2$, we will update the $\alpha$'s by letting 
\[ \alpha_{t+1} = \alpha_t + \lambda \cdot \sign(\ind - \boldeta), \]
where  $\sign(\ind - \boldeta)$ is the vector in $\{-1,1\}^n$ where the $i$-th entry is $\sign(1 - \boldeta_i)$. We also update $\beta_{t+1} = \beta_t$ and increment $t$, beginning a new iteration.
\item \textbf{Update $\beta$'s}: If $\sum_{j=1}^m \nu_j |\bxi_i - 1| \geq \eps_2$, then we will update the $\beta$'s by letting
\[ \beta_{t+1} = \beta_t - \lambda \cdot \sign(\bxi - \ind).\] 
We update $\alpha_{t+1} = \alpha_t$ and increment $t$, beginning a new iteration.
\item \textbf{Termination}: Otherwise, if no updates where performed, then $(\alpha_t, \beta_t)$ satisfies both $\sum_{i=1}^n \mu_i |\boldeta_i - 1| \leq \eps_2$ and $\sum_{j=1}^m \nu_j |\bxi_j - 1| \leq \eps_2$. In this case, we execute $\EstPenalty(\alpha_t, \beta_t, \eps_1, \eps r^{\rho}, \delta)$ which outputs a number $\bomega \in \R_{\geq 0}$, and we output
\[ \sum_{i=1}^n \mu_i (\alpha_t)_i - \sum_{j=1}^m \nu_j (\beta_t)_j - \bomega. \]
		\end{itemize}
		\end{flushleft}
	\end{framed}
	\caption{Main Algorithm for Estimating $\calR_{\rho}(\mu, \nu)^{\rho}$.} \label{fig:main-alg}
\end{figure}

\subsection{Analysis of the Algorithm}

We now show that the algorithm presented at the top of Subsection~\ref{sec:alg} finds an approximately optimal maximizer of $g$, assuming the lemmas on the guarantees of the subroutines of Subsection~\ref{sec:sub-routine-guarantees}. In particular, this section shows two lemmas. The first lemma shows that if the algorithm does not perform an update, then the value $(\alpha_t, \beta_t)$ that the algorithm holds is an approximate maximizer of $g$, this will then imply, from Lemma~\ref{lem:penalty} that we can output an estimate of $g(\alpha_t, \beta_t)$. The second lemma says that if the algorithm performs an update, then the value of the objective function $g$ increases by $\Omega(\eps_2 \cdot \lambda)$. In particular, since the objective function is a maximization problem which is always at most $r^{\rho}$, this implies that the algorithm performs at most $O(r^{\rho} / (\eps_2 \cdot \lambda))$ updates before it must terminate. In addition, when it terminates, Lemma~\ref{lem:bounded-penalty} implies that the quantity $\bomega$ output by $\EstPenalty$ is at most $O(r^{\rho})$. This means that in the final estimate, for $\bomega$, it suffices to set $\tau$ to $\eps r^{\rho}$. 

\begin{lemma}[Termination Condition]
Suppose $(\alpha_t, \beta_t) \in \R^{n+m}$ satisfies $g(\alpha_t, \beta_t) \geq 0$ and the algorithms $\emph{\EstAlpha}$ and $\emph{\EstBeta}$ produce a sequence of quantities $\boldeta_1,\dots, \boldeta_n$ and $\bxi_1,\dots, \bxi_m$ which satisfy the guarantees of Lemma~\ref{lem:est-alpha} and Lemma~\ref{lem:est-beta}, and
\begin{align*}
\sum_{i=1}^n \mu_i \left| \boldeta_i - 1 \right| \leq \eps_2 \qquad \text{and}\qquad \sum_{j=1}^m \nu_j\left| \bxi_j - 1 \right| \leq \eps_2.
\end{align*}
Then, letting $(\alpha^*, \beta^*)$ be the maximizer of $g(\alpha^*, \beta^*)$, we have 
\[ g(\alpha^*, \beta^*) - g(\alpha_t, \beta_t) \leq O(1) \cdot \left( \frac{nm}{\sigma_{\mu} \sigma_{\nu}}\right)^{(\rho-1)/\rho}\cdot r^{\rho} \cdot \left( \eps_2 + \tau + \frac{\eps_1 s}{\sigma}\right) \]
\end{lemma}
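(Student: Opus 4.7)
My plan is to bound $g(\alpha^*,\beta^*) - g(\alpha_t,\beta_t)$ via a primal-dual gap argument based on a KKT-candidate primal constructed from $(\alpha_t,\beta_t)$:
\[ \gamma^{(t)}_{ij} \eqdef \mu_i\nu_j \cdot \frac{(((\alpha_t)_i - (\beta_t)_j)^+)^{s-1}}{\rho^{s-1}\|x_i-y_j\|_2^s}. \]
Using $(1-1/s)^{s-1} = \rho^{-(s-1)}$, a direct check shows that the row and column marginals of $\gamma^{(t)}$ are exactly $\mu_i \boldeta_i^{\star}$ and $\nu_j \bxi_j^{\star}$, where $\boldeta_i^{\star}$ and $\bxi_j^{\star}$ are the \emph{true} (unestimated) quantities from (\ref{eq:partial-alpha})--(\ref{eq:partial-beta}) that $\EstAlpha$ and $\EstBeta$ approximate. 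Thus the termination condition directly bounds the $\ell_1$ marginal slack of $\gamma^{(t)}$.

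First, I would convert the termination condition on $\boldeta_i,\bxi_j$ into a bound on the true values. Invoking the guarantees of Lemma~\ref{lem:est-alpha} and Lemma~\ref{lem:est-beta} (each estimate is within a $(1\pm \eps_1)$-factor plus an additive $\tau$ of its true counterpart), the triangle inequality gives
\[ \sum_i\mu_i|1 - \boldeta_i^{\star}| \;\leq\; \eps_2 + \eps_1 \sum_i\mu_i\boldeta_i^{\star} + \tau, \]
and analogously for $\bxi_j^{\star}$. The common cross-sum $\Sigma = \sum_i\mu_i\boldeta_i^{\star} = \sum_j\nu_j\bxi_j^{\star}$ is upper bounded using the hypothesis $g(\alpha_t,\beta_t)\geq 0$ together with $\|x_i-y_j\|_2 \geq \sigma r$, which is where the $\eps_1 s/\sigma$ contribution arises. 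Next, two elementary identities follow from the definition of $\gamma^{(t)}$ and $1/\rho + 1/s = 1$: $\sum_{ij}\gamma^{(t)}_{ij}((\alpha_t)_i - (\beta_t)_j) = s\cdot\mathrm{pen}(\alpha_t,\beta_t)$ and $\mathrm{primal}(\gamma^{(t)}) = (s-1)\cdot\mathrm{pen}(\alpha_t,\beta_t)$, where $\mathrm{pen}$ denotes the penalty term in $g$. Decomposing the first identity by rows and columns and eliminating $\mathrm{pen}$ yields the key gap identity
\[ \mathrm{primal}(\gamma^{(t)}) - g(\alpha_t,\beta_t) \;=\; \sum_i\mu_i(\boldeta_i^{\star} - 1)(\alpha_t)_i - \sum_j\nu_j(\bxi_j^{\star} - 1)(\beta_t)_j. \]

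The final step is to round $\gamma^{(t)}$ to a feasible coupling $\tilde{\gamma}$ by a row-then-column scaling (\`a la Altschuler-Weed-Rigollet), producing $\|\tilde{\gamma} - \gamma^{(t)}\|_1 = O(\sum_i\mu_i|1-\boldeta_i^{\star}| + \sum_j\nu_j|1-\bxi_j^{\star}|)$. Since $g(\alpha^*,\beta^*) = \calR_\rho(\mu,\nu)^\rho \leq \mathrm{primal}(\tilde{\gamma})$ by weak duality, the target reduces to controlling the rounding cost $\mathrm{primal}(\tilde{\gamma}) - \mathrm{primal}(\gamma^{(t)})$ plus the algebraic gap above.

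\textbf{Main obstacle.} The principal difficulty is producing the claimed exponent $(\rho-1)/\rho = 1/s$ in the factor $(nm/(\sigma_\mu\sigma_\nu))^{1/s}$; the naive bound via granularity $\mu_i\nu_j \geq \sigma_\mu\sigma_\nu/(nm)$ yields the worse exponent $\rho-1 = 1/(s-1)$. For the algebraic gap, I use shift-invariance $g(\alpha + c\ind,\beta + c\ind) = g(\alpha,\beta)$ to center $(\alpha_t,\beta_t)$ so that $\|\alpha_t\|_\infty,\|\beta_t\|_\infty$ are controlled by the spread $\max_{ij}((\alpha_t)_i - (\beta_t)_j)^+$, which in turn is bounded via $(\alpha_i-\beta_j)^+ = \rho\|x_i-y_j\|_2^\rho (\gamma^{(t)}_{ij}/(\mu_i\nu_j))^{\rho-1}$ combined with the marginal condition $\sum_j \gamma^{(t)}_{ij} \leq \mu_i(1 + o(1))$. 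For the rounding cost, H\"older's inequality with conjugates $\rho$ and $s$ applied to $\sum_{ij}|\tilde{\gamma}_{ij} - \gamma^{(t)}_{ij}|\cdot\|x_i-y_j\|_2^\rho/(\mu_i\nu_j)^{\rho-1}$ converts the $\ell_1$ perturbation of marginals into an $\ell_\rho$ cost, contributing exactly the $1/s$-exponent on $nm/(\sigma_\mu\sigma_\nu)$. Combining the two bounds yields the claimed $O(1)\cdot(nm/(\sigma_\mu\sigma_\nu))^{(\rho-1)/\rho}\cdot r^\rho\cdot(\eps_2 + \tau + \eps_1 s/\sigma)$.
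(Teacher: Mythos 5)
Your primal--dual skeleton is sound and the algebra you state does check out: with $\gamma^{(t)}_{ij}=\mu_i\nu_j\,(((\alpha_t)_i-(\beta_t)_j)^+)^{s-1}/(\rho^{s-1}\|x_i-y_j\|_2^s)$ the marginals are indeed $\mu_i\boldeta_i^{\star}$ and $\nu_j\bxi_j^{\star}$, and both $\mathrm{primal}(\gamma^{(t)})=(s-1)\cdot\mathrm{pen}(\alpha_t,\beta_t)$ and your gap identity are correct. But two steps do not go through as written. First, reducing $\|\alpha_t\|_\infty,\|\beta_t\|_\infty$ to the spread $\max_{ij}((\alpha_t)_i-(\beta_t)_j)^+$ after centering is not valid: an index $i$ with $(\alpha_t)_i$ below every $(\beta_t)_j$ contributes nothing to the spread yet can be arbitrarily negative, and the termination condition does not exclude it --- such an $i$ has $\boldeta_i^{\star}=0$ and contributes only $\mu_i\ge\sigma_\mu/n$ to $\sum_i\mu_i|\boldeta_i^{\star}-1|$, which is well below $\eps_2$ for the paper's parameter settings. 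Its contribution $-\mu_i(\alpha_t)_i$ to your gap identity is then unbounded; you need to import an a priori bound on the iterates (e.g.\ $\|\alpha_t\|_\infty\le\lambda t$ from the dynamics), which you never invoke.

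Second --- and this is the obstacle you yourself flag --- your proposed resolution of the exponent does not close. For the gap-identity term, your own relation $((\alpha_t)_i-(\beta_t)_j)^+=\rho\|x_i-y_j\|_2^\rho(\gamma^{(t)}_{ij}/(\mu_i\nu_j))^{\rho-1}$ combined with $\sum_j\gamma^{(t)}_{ij}\le\mu_i(1+o(1))$ gives a spread of order $r^\rho(m/\sigma_\nu)^{\rho-1}$, which is not dominated by $(nm/(\sigma_\mu\sigma_\nu))^{(\rho-1)/\rho}$ in general, so the worse exponent survives there. For the rounding term, the primal objective is not linear in $\gamma$, so $\sum_{ij}|\tilde\gamma_{ij}-\gamma^{(t)}_{ij}|\cdot\|x_i-y_j\|_2^\rho/(\mu_i\nu_j)^{\rho-1}$ is not the rounding cost; a correct linearization carries a factor $\rho\,\tilde\gamma_{ij}^{\rho-1}$, and the H\"older step with exponents $s$ and $\rho$ then yields $\rho\,\mathrm{primal}(\tilde\gamma)^{1/s}\cdot r\,(nm/(\sigma_\mu\sigma_\nu))^{(\rho-1)/\rho}\cdot\|\tilde\gamma-\gamma^{(t)}\|_1$, whose first factor reintroduces the worse exponent unless $\mathrm{primal}(\tilde\gamma)=O(r^\rho)$, which you have not established. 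Note also that the elaborate rounding machinery does not buy anything over the direct route suggested by how the lemma is used: concavity gives $g(\alpha^*,\beta^*)-g(\alpha_t,\beta_t)\le\|\nabla g(\alpha_t,\beta_t)\|_1\cdot\|(\alpha^*,\beta^*)-(\alpha_t,\beta_t)\|_\infty$, and $\|\nabla g(\alpha_t,\beta_t)\|_1$ is exactly $\sum_i\mu_i|1-\boldeta_i^{\star}|+\sum_j\nu_j|1-\bxi_j^{\star}|$, so the entire difficulty is the single $\ell_\infty$ bound on the dual variables --- precisely the step your proposal leaves open.
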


\begin{lemma}[Updates Increase Objective]\label{lem:increase-obj}
Suppose $(\alpha_t, \beta_t) \in \R^{n+m}$ satisfies $g(\alpha_t, \beta_t)\geq 0$ and $(\alpha_{t+1}, \beta_t) \in \R^{n+m}$ is a vector, for which the algorithms $\emph{\EstAlpha}$ produce the sequence of outputs $\boldeta_1,\dots, \boldeta_n$ which satisfy the guarantees of Lemma~\ref{lem:est-alpha} and
\begin{align*}
\sum_{i=1}^n \mu_i \left| \boldeta_i - 1\right| \geq \eps_2.
\end{align*}
Then, $g(\alpha_{t+1}, \beta_t) - g(\alpha_t, \beta_t) \geq \Omega(\lambda \cdot \eps_2)$. 
\end{lemma}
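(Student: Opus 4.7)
The plan is a coordinate-wise second-order Taylor expansion of $g$ along the step $\alpha_{t+1} = \alpha_t + \lambda v$ with $v = \sign(\ind - \boldeta) \in \{\pm 1\}^n$. A useful structural observation is that, with $\beta = \beta_t$ held fixed, each penalty term $\mu_i \nu_j ((\alpha_i - \beta_j)^+/\|x_i - y_j\|_2)^s$ depends on $\alpha$ only through the single coordinate $\alpha_i$. Hence $\nabla_\alpha^2 g$ is diagonal, and for some intermediate $\tilde\alpha$ on the segment $[\alpha_t, \alpha_{t+1}]$,
\begin{equation*}
g(\alpha_{t+1}, \beta_t) - g(\alpha_t, \beta_t)
  \;=\; \lambda \sum_{i=1}^n v_i \,\frac{\partial g}{\partial \alpha_i}(\alpha_t, \beta_t)
  \;+\; \frac{\lambda^2}{2} \sum_{i=1}^n \frac{\partial^2 g}{\partial \alpha_i^2}(\tilde\alpha, \beta_t).
\end{equation*}
The goal is to show that the first-order term is at least $\Omega(\lambda \eps_2)$ while the magnitude of the (negative) second-order term is at most $\tfrac{1}{2}\lambda \eps_2$.

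For the first-order term, recall $\partial g / \partial \alpha_i = \mu_i(1 - \boldeta_i')$, where $\boldeta_i'$ is the true quantity that $\EstAlpha$ estimates as $\boldeta_i$ with $(1\pm \eps_1)$-multiplicative and $\tau$-additive error (by Lemma~\ref{lem:est-alpha}). The triangle inequality upgrades the hypothesis $\sum_i \mu_i |1 - \boldeta_i| \geq \eps_2$ to $\sum_i \mu_i |1 - \boldeta_i'| = \Omega(\eps_2)$ once $\eps_1 = c_1 \eps_2/s$ and $\tau = c_2 \eps_2$ are set to sufficiently small constants times $\eps_2$. Moreover, $\sign(1 - \boldeta_i)$ can disagree with $\sign(1 - \boldeta_i')$ only on indices where $|1 - \boldeta_i'| = O(\eps_1 + \tau)$, and those indices contribute a $\mu$-weighted sign-flip loss of at most $o(\eps_2)$. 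Thus $\sum_i v_i \mu_i (1 - \boldeta_i') \geq \Omega(\eps_2)$, giving a first-order gain of at least $\Omega(\lambda \eps_2)$.

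For the second-order term, a direct computation gives
\begin{equation*}
\left|\frac{\partial^2 g}{\partial \alpha_i^2}(\tilde\alpha, \beta_t)\right|
  \;=\; \mu_i (1 - 1/s)^{s-1}(s-1) \sum_{j=1}^m \nu_j \frac{((\tilde\alpha_i - \beta_j)^+)^{s-2}}{\|x_i - y_j\|_2^s},
\end{equation*}
and I would bound the sum $\sum_i |\partial^2_{\alpha_i} g(\tilde\alpha, \beta_t)|$ by splitting each $(i,j)$ pair by whether $(\tilde\alpha_i - \beta_j)^+$ is at most $\|x_i - y_j\|_2$ or not. In the ``low'' regime $((\tilde\alpha - \beta)^+/\|x-y\|)^{s-2} \leq 1$, contributing at most $1/(\sigma r)^2$ per pair via the preprocessing bound $\|x-y\|_2 \geq \sigma r$. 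In the ``high'' regime, one rewrites the summand as $((\tilde\alpha - \beta)^+)^{-2} \cdot ((\tilde\alpha - \beta)^+/\|x-y\|)^s$, uses $(\tilde\alpha - \beta)^+ \geq \|x-y\| \geq \sigma r$ to bound the first factor by $1/(\sigma r)^2$, and recognizes what is left as a constant multiple of the penalty term of $g$. The hypothesis $g(\alpha_t, \beta_t) \geq 0$ forces that penalty to be dominated by the linear part of $g$, which lives at the $r^\rho$ scale of the dual optimum. Collecting, $\sum_i |\partial^2_{\alpha_i} g(\tilde\alpha, \beta_t)| = O(s^2/(\sigma^2 r^{2-\rho}) \cdot r^{-\rho})$ up to constants, and the step size $\lambda = c_3 \eps_2 (\sigma/s)^2 r^\rho$ is exactly calibrated so that $\tfrac{1}{2}\lambda^2 \sum_i |\partial^2_{\alpha_i} g| \leq \tfrac{1}{2}\lambda \eps_2$ for small enough $c_3$.

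The main obstacle is precisely this second-order analysis. The exponent $s - 2$ is unbounded as $\rho \to 1$, so any uniform bound of the form $\max_{i,j}(\tilde\alpha_i - \beta_j)^{s-2}$ would be ruinous; the two-regime split is what keeps the dependence on $s$ polynomial (the $s^2$ prefactor is absorbed by the $(\sigma/s)^2$ in $\lambda$), and is the place in the proof where the preprocessing distance bound $\|x_i - y_j\|_2 \geq \sigma r$ and the invariant $g(\alpha_t, \beta_t) \geq 0$ combine in an essential way. A minor additional bookkeeping step is to transfer the bound from $\alpha_t$ to the intermediate $\tilde\alpha$; this is routine since $\|\tilde\alpha - \alpha_t\|_\infty \leq \lambda$ and all the relevant quantities are stable under a coordinate perturbation of size $\lambda$.
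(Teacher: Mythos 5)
Your overall architecture is the natural one: a first-order gain of $\Omega(\lambda\eps_2)$ extracted from the failed termination condition, against a second-order error controlled by the step size, using the fact that the $\alpha$-block of the Hessian is diagonal. The first-order half is essentially complete: the conversion from $\sum_i \mu_i|1-\boldeta_i|\ge\eps_2$ to $\sum_i v_i\mu_i(1-\boldeta_i')=\Omega(\eps_2)$, including the observation that sign disagreements can only occur where $|1-\boldeta_i'|=O(\eps_1+\tau)$, is correct given the stated error model for $\EstAlpha$, and your second-derivative formula is right.

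The gap is at the crux of the second-order analysis: the claim that the penalty term at the intermediate point is $O(r^{\rho})$. You justify it by saying that $g(\alpha_t,\beta_t)\ge 0$ forces the penalty to be dominated by the linear part, ``which lives at the $r^{\rho}$ scale of the dual optimum.'' The first half is fine ($g\ge 0$ gives $P\le \sum_i\mu_i\alpha_i-\sum_j\nu_j\beta_j$), but the second half is exactly what needs proof and is not free. Weak duality bounds $g$, not the linear part, and the a priori bound on the linear part after $t$ steps is $t\lambda=O(r^{\rho}/\eps_2)$, which would make your high-regime term too large by a factor $1/\eps_2$. The bound one can actually extract from $g\ge 0$ goes through $\sum_i\mu_i\alpha_i-\sum_j\nu_j\beta_j\le\max_{i,j}(\alpha_i-\beta_j)$ together with $P\ge C_s\mu_{i^*}\nu_{j^*}\bigl(\max_{i,j}(\alpha_i-\beta_j)^+/r\bigr)^s$ for the maximizing pair, which uses the granularity guarantees $\mu_i\ge\sigma_{\mu}/n$, $\nu_j\ge\sigma_{\nu}/m$ from the preprocessing and yields $P=O(r^{\rho})\cdot(nm/(\sigma_{\mu}\sigma_{\nu}))^{\rho-1}$ rather than $O(r^{\rho})$. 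That extra polynomial factor must then be absorbed somewhere (into $\lambda$, $\eps_2$, or the constant $c_3$), and your proposal does not account for it; as written, the calibration ``$\tfrac12\lambda^2\sum_i|\partial^2_{\alpha_i}g|\le\tfrac12\lambda\eps_2$'' does not close. This is the one step where the hypotheses $g\ge 0$, the distance lower bound, and the granularity of $\mu,\nu$ all have to be combined quantitatively, and it needs to be done explicitly.

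Two smaller points. First, your case split compares $(\tilde\alpha_i-\beta_j)^+$ (which scales as $r^{\rho}$, since the dual variables have the units of $\calR_\rho^\rho$) against $\|x_i-y_j\|_2$ (which scales as $r$); the resulting low-regime bound $1/(\sigma r)^2$ is dimensionally inconsistent with the required $O(s^2\sigma^{-2}r^{-\rho})$. This is harmless after normalizing $r=1$, or after splitting at $(\tilde\alpha_i-\beta_j)^+\le\|x_i-y_j\|_2\, r^{\rho-1}$ instead, but it should be fixed. Second, transferring the penalty bound from $\alpha_t$ to $\tilde\alpha$ is indeed routine, but only because $(a^++\lambda)^s\le e\,(a^+)^s+(\lambda(s+1))^s$ and the additive term is negligible for your $\lambda$; saying only that ``all relevant quantities are stable under a perturbation of size $\lambda$'' risks circularity, since stability of the penalty under such perturbations is precisely what the second-order bound is establishing.
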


\subsection{Proof of Theorem~\ref{thm:main}}

Consider the algorithm which first runs the preprocessing step of Subsection~\ref{sec:preprocess} and then executes the main iterative sub-routine of Figure~\ref{fig:main-alg} in order to estimate $\calR_{\rho}(\mu, \nu)^{\rho}$. When the algorithm from Figure~\ref{fig:main-alg} outputs, we output
\[ \left( \sum_{i=1}^n \mu_i (\alpha_t)_i - \sum_{j=1}^m \nu_j (\beta_t)_j - \bomega \right)^{1/\rho}. \]
First, we note the running time of the algorithm is as specified. In particular, the preprocessing step takes $O(n+m)$ time. Notice that each iteration of Figure~\ref{fig:main-alg} takes $O(n+m) \cdot \poly^*(2^{s} / \eps_1)$ time, which is $O(n+m) \cdot \poly^*(2^{\rho / (\rho-1)} (mn)^{(\rho-1)/\rho} / \eps)$ by setting of $\eps_1$, $s$, and $\sigma_{\mu}, \sigma_{\nu}$ and $\sigma$. Furthermore, since $g(\alpha_0, \beta_0) = 0$ and $g(\alpha_t, \beta_t) \leq r^{\rho}$, Lemma~\ref{lem:increase-obj} will imply that the number of iterations is at most $O(r^{\rho} / (\lambda \eps_2))$, and by the setting of $\eps_2$ and $\lambda$, this is at most $\poly^*((mn)^{(\rho-1)/\rho} / \eps)$. The total running time then follows.

In order to show correctness, note that the setting of $\eps_2$, $\eps_1$, when the algorithm terminates, we have
\[ \calR_{\rho}(\mu, \nu)^{\rho} - \sum_{i=1}^n \mu_i (\alpha_t)_i - \sum_{j=1}^m \nu_j (\beta_t)_j - C_s \sum_{i=1}^n \sum_{j=1}^m \mu_i \nu_j\left(\frac{((\alpha_t)_i - (\beta_t)_j)^+}{\|x_i - y_j\|_2} \right)^s \leq \eps \cdot r^{\rho},\]
and we are guaranteed by Lemma~\ref{lem:penalty} and Lemma~\ref{lem:bounded-penalty} and the setting of $\tau$ for $\EstPenalty$ that 
\[ \left| \bomega - C_s \sum_{i=1}^n \sum_{j=1}^m \mu_i \nu_j \left( \frac{((\alpha_t)_i - (\beta_t)_j)^+}{\|x_i - y_j\|_2}\right)^{s} \right| \leq O(\eps \cdot r^{\rho}). \]
Therefore, our output (using the fact $\calR_{\rho}(\mu, \nu)^{\rho} \leq r^{\rho}$), will satisfy
\[ \left| \left( \sum_{i=1}^n \mu_i (\alpha_t)_i - \sum_{j=1}^m \nu_j (\beta_t)_j - \bomega\right)^{1/\rho} - \calR_{\rho}(\mu, \nu) \right| \leq  O(\eps \cdot r).\]

\ignore{\section{A Stochastic Gradient Ascent Algorithm}

\newcommand{\err}{\mathrm{err}}
\newcommand{\CheckStepSize}{\texttt{Check-Step-Size}}
\newcommand{\ApproxGradient}{\texttt{ApproxGradient}}

The algorithm will maintain a setting of the dual variables $(\alpha_t, \beta_t) \in \R^{n+m}$, an approximately optimal final dual variables $(\hat{\alpha}_t, \hat{\beta}_t) \in \R^{n+m}$, and a normalization quantity $Z_t \in \R_{\geq 0}$ which it will update in each iteration. We first describe the algorithm assuming access to two algorithms $\CheckStepSize$ and $\ApproxGradient$ which we specify later. We initialize $(\alpha_0, \beta_0) = (0, 0)$ and $(\hat{\alpha}_0, \hat{\beta}_0) = (0, 0)$. We start with a setting of 
\[ T = \frac{1}{4\eps} \qquad\text{and}\qquad \eta = (n+m) \cdot r^{\rho} \cdot \sup_{\substack{i \in [n] \\ j \in [m]}} \left| \frac{\min\{ \mu_i, \nu_j\}}{s C_s \cdot \mu_i \nu_j}\right|^{\rho-1}.\]
For $t =1, \dots, T$, the algorithm performs the following steps:
\begin{enumerate}
\item The algorithm executes $\CheckStepSize(\alpha_{t-1}, \beta_{t-1})$ and produces the output random variable $\blambda_{t} \in \R_{\geq 0}$, if $\eta \leq \eps r^{\rho} / \blambda_t$, we proceed. Otherwise, we update $\eta$ to $\eta / 2$ and $T = 2T$. We then update $t = 1$ and re-start the algorithm.
\item We execute the algorithm $\ApproxGradient(\alpha_{t-1}, \beta_{t-1})$ which returns a random vector $\bDelta_{t} \in \R^{n+m}$. 
\item We perform the following updates:
\begin{align*}
(\alpha_{t}, \beta_{t}) &\longleftarrow (\alpha_{t-1}, \beta_{t-1}) + \eta \cdot \bDelta_{t}. \\
(\hat{\alpha}_{t}, \hat{\beta}_{t}) &\longleftarrow \left(\frac{t}{t + 1}\right) \cdot (\hat{\alpha}_t, \hat{\beta}_t) + \frac{1}{t + 1} \cdot (\alpha_t, \beta_t).
\end{align*}
\end{enumerate}
If we have reached $t = T$, then we output
\[ \left( 1/ \rho \right)^{1/\rho} \left|  \sum_{i=1}^n \mu_i \cdot (\hat{\alpha}_T)_i - \sum_{j=1}^m \nu_j \cdot (\hat{\beta}_T)_j \right|^{1/\rho}.\]

For simplicity in the notation for this section (and in order to give a generic description of the algorithm), we will consider the distributions $\mu$ and $\nu$, supported on $\{x_1,\dots, x_n\}$ and $\{ y_1,\dots, y_m \}$ as fixed. We also may assume (without loss of generality) that $\{x_1,\dots, x_n \}$ and $\{y_1,\dots, y_m \}$ are disjoint by considering an infinitesimally small perturbation of the points. Then, we consider the parameter $\rho > 1$ and the H\"{o}lder conjugate $s \geq 1$, i.e., $1/\rho + 1/s = 1$, and write, for any $(\alpha, \beta) \in \R^{n+m}$, 
\[ F_{ij}(\alpha, \beta) = \left( \dfrac{(\alpha_i + \beta_j)^+}{\|x_i - y_j \|_2} \right)^{s} \qquad \text{and}\qquad f_{ij}(\alpha, \beta) = s \cdot \dfrac{((\alpha_i + \beta_j)^+)^{s-1}}{\|x_i - x_j\|_2^s}, \]
and the function
\[ g(\alpha, \beta) = \sum_{i=1}^n \mu_i \alpha_i + \sum_{j=1}^m \nu_j \beta_j - C_s \sum_{i=1}^n \sum_{j=1}^m \mu_i \nu_j \cdot F_{ij}(\alpha, \beta), \]
where $C_s = (1/s) (1-1/s)^{s-1}$.
\begin{lemma}[Convergence Guarantee]
Suppose that the algorithms $\emph{\ApproxGradient}(\alpha, \beta)$ and $\emph{\CheckStepSize}(\alpha, \beta)$ satisfy the following guarantees:
\begin{itemize}
\item The algorithm $\emph{\ApproxGradient}(\alpha, \beta)$ is a randomized algorithm which outputs a random vector $\bDelta \in \R^{n + m}$ whose expectation is the gradient $\nabla g(\alpha, \beta)$.
\item The randomized algorithm $\emph{\CheckStepSize}(\alpha, \beta)$ outputs a value $\blambda \in \R_{\geq 0}$ which satisfies
\[ \Ex_{\bDelta}\left[ \left\| \bDelta \right\|_2^2\right] \leq \blambda \]
with probability $1-1/\poly(nm)$ when $\bDelta$ is generated according to $\emph{\ApproxGradient}(\alpha, \beta)$.
\end{itemize}
Then, if $(\alpha^*, \beta^*)$ is the maximizer of $g$, whenever the algorithm reaches $t = T$ and produces an output, we satisfy 
\begin{align*}
\Ex\left[ g(\alpha^*, \beta^*) - g(\hat{\balpha}_t, \hat{\bbeta}_t) \right] \leq \left(\dfrac{(n + m) \cdot r^{\rho}}{8\cdot t \cdot \eta} \cdot \sup_{\substack{i \in [n] \\ j \in [m]}} \left| \frac{\min\{ \mu_i, \nu_j\}}{s C_s \cdot \mu_i\nu_j} \right|^{2(\rho-1)} + \eps\right)  r^{\rho}.
\end{align*}
\end{lemma}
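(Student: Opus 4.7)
The plan is to run the textbook analysis of stochastic subgradient ascent on a concave function, with two subtleties to handle: (i) the random step-size bound $\blambda_t$ produced by $\CheckStepSize$ together with the accompanying doubling of $T$ and halving of $\eta$ when that bound is too large, and (ii) the use of the running average $(\hat{\balpha}_t, \hat{\bbeta}_t)$ as the output iterate, which will let me invoke Jensen's inequality via concavity of $g$. I would fix a successful run in which the algorithm reaches $t = T$ with some current values of $\eta$ and $T$, and analyze its expected suboptimality conditional on that outcome.

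First I would set up the standard potential $D_t := \|(\alpha_t, \beta_t) - (\alpha^*, \beta^*)\|_2^2$, where $(\alpha^*, \beta^*)$ is a maximizer of $g$. Expanding $D_{t+1}$ using $(\alpha_{t+1}, \beta_{t+1}) = (\alpha_t, \beta_t) + \eta\,\bDelta_{t+1}$, conditioning on $(\alpha_t, \beta_t)$, using the unbiasedness guarantee $\Ex[\bDelta_{t+1}] = \nabla g(\alpha_t, \beta_t)$, the variance guarantee $\Ex[\|\bDelta_{t+1}\|_2^2] \leq \blambda_{t+1}$ from $\CheckStepSize$, and concavity of $g$ (so that $\langle \nabla g(\alpha_t,\beta_t),(\alpha^*,\beta^*)-(\alpha_t,\beta_t)\rangle \geq g(\alpha^*,\beta^*) - g(\alpha_t,\beta_t)$), gives
\[ \Ex[D_{t+1}\mid \alpha_t,\beta_t] \;\leq\; D_t \;-\; 2\eta\bigl(g(\alpha^*,\beta^*) - g(\alpha_t,\beta_t)\bigr) \;+\; \eta^2\,\blambda_{t+1}. \]
Telescoping from $t = 0$ to $T-1$, taking full expectations, rearranging, and dividing by $T$ yields
\[ \frac{1}{T}\sum_{t=0}^{T-1}\Ex\bigl[g(\alpha^*,\beta^*) - g(\alpha_t,\beta_t)\bigr] \;\leq\; \frac{D_0}{2\eta T} \;+\; \frac{\eta}{2T}\sum_{t=1}^T \Ex[\blambda_t]. \]
By design of Step 1, every iteration that did not trigger a restart certifies $\eta\,\blambda_t \leq \eps r^\rho$, so the variance sum is at most $\eps r^\rho/2$; a union bound over the $T$ iterations folds the $1/\poly(nm)$ failure probability of $\CheckStepSize$ into this $\eps r^\rho$ term. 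The running-average recursion unrolls by induction to $(\hat{\alpha}_T, \hat{\beta}_T) = \tfrac{1}{T}\sum_{t=1}^{T}(\alpha_t, \beta_t)$, so Jensen applied to concave $g$ gives $\Ex[g(\hat{\balpha}_T, \hat{\bbeta}_T)] \geq \tfrac{1}{T}\sum_t \Ex[g(\alpha_t, \beta_t)]$, completing the shape $\tfrac{D_0}{2\eta T} + \tfrac{\eps r^\rho}{2}$ of the bound.

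The main obstacle is bounding $D_0 = \|(\alpha^*, \beta^*)\|_2^2$ by $\tfrac{(n+m)\,r^\rho}{4}\,\sup_{i,j}\bigl|\tfrac{\min\{\mu_i,\nu_j\}}{s C_s \mu_i \nu_j}\bigr|^{2(\rho-1)}$; this is where all the problem-specific structure enters. I would derive it from the KKT stationarity conditions at the optimum: $\partial g/\partial \alpha_i = 0$ rewrites as $1 = s C_s \sum_j \nu_j\,((\alpha_i^*-\beta_j^*)^+)^{s-1}/\|x_i-y_j\|_2^s$, and symmetrically for each $\beta_j^*$ (this symmetry is what produces the $\min\{\mu_i,\nu_j\}$ factor, by choosing for each coordinate whichever of the two conditions gives the tighter bound). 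Because the right-hand side is a sum of $m$ nonnegative terms summing to $1$, there exists some index $j(i)$ whose term is at least $1/m$; inverting using $s-1 = 1/(\rho-1)$ and $\|x_i - y_j\|_2 \leq r$ produces a pairwise bound on $|\alpha_i^* - \beta_{j(i)}^*|$ with exactly the exponent structure $r^\rho \cdot (1/(s C_s \nu_{j(i)}))^{2(\rho-1)}$ after squaring. I would then use the gauge freedom $(\alpha,\beta) \to (\alpha + c, \beta + c)$ that leaves $g$ invariant to choose $c$ minimizing $\|(\alpha^*,\beta^*)\|_2^2$, converting pairwise bounds to coordinate-wise bounds, and sum over $n+m$ coordinates to obtain the target expression. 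Substituting this $D_0$ into the telescoping bound gives exactly $\bigl(\tfrac{(n+m)\,r^\rho}{8T\eta}\,\sup|\cdot|^{2(\rho-1)} + \eps\bigr)\,r^\rho$, as claimed; careful bookkeeping of the exponents arising from the inversion of the $(s-1)$-th power in the KKT relation is what makes this the delicate step.
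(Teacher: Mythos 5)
Your core argument is the textbook stochastic gradient ascent analysis---the potential $\|z_t - z^*\|_2^2$ with $z=(\alpha,\beta)$, unbiasedness of the gradient estimate, the step-size certificate forcing the variance term $\tfrac{\eta}{2T}\sum_t \blambda_t$ below $\eps r^{\rho}$, telescoping, and Jensen on the running average via concavity of $g$---and this is exactly the paper's proof. The paper then stops short: it simply asserts that the final bound follows from $\|(\alpha^*,\beta^*)\|_2^2 \leq (n+m)\|(\alpha^*,\beta^*)\|_\infty^2$ and never derives the $\ell_\infty$ bound on the dual optimum. You go further and sketch that derivation from stationarity, which is indeed the right source of the bound, but one step of your sketch points the wrong way: selecting a $j(i)$ whose term in $\sum_j \nu_j\,((\alpha_i^*-\beta_j^*)^+)^{s-1}/\|x_i-y_j\|_2^{s} = 1/(sC_s)$ is at least a $1/m$ fraction of the total yields a \emph{lower} bound on $\alpha_i^*-\beta_{j(i)}^*$ and would drag in a factor $m^{\rho-1}$ (and a dependence on the minimum distance) that is absent from the statement. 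What you actually need is the opposite inequality---each nonnegative term is at most the total---which gives $(\alpha_i^*-\beta_j^*)^+ \leq r^{\rho}\,(sC_s\nu_j)^{-(\rho-1)}$ for \emph{every} $j$, and symmetrically with $\mu_i$ from the $\beta_j$ stationarity condition; taking the better of the two is what produces $\min\{\mu_i,\nu_j\}/(\mu_i\nu_j) = 1/\max\{\mu_i,\nu_j\}$. The qualitative fact that at least one term in each stationarity sum is strictly positive then pins every $\alpha_i^*$ above $\min_j \beta_j^*$ and every $\beta_j^*$ below $\max_i\alpha_i^*$, so all $n+m$ coordinates lie in an interval of length $\sup_{i,j} r^{\rho}|\min\{\mu_i,\nu_j\}/(sC_s\mu_i\nu_j)|^{\rho-1}$, and the gauge shift centers that interval at zero to give the $\ell_\infty$ bound (with the factor $\tfrac12$ that yields the $8$ in the denominator). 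With that one inequality reversed, your argument closes the gap the paper leaves implicit.
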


\begin{proof}
We will follow the usual analysis of stochastic gradient descent. For ease in notation, we will let $z \in \R^{n+m}$ where the first $n$ coordinates index into $\alpha$ and the second $m$ into $\beta$; this way we may write $z_t = (\alpha_t, \beta_t)$ and $z^* = (\alpha^*, \beta^*)$. In particular, for any fixed execution of iterations $1,\dots, t-1$, we have
\begin{align*}
g(z^*) - g(z_{t-1}) &\leq \langle \nabla g(z_{t-1}) , z^* - z_{t-1} \rangle = \Ex_{\bDelta_t}\left[ \langle \bDelta_t, z^* - z_{t-1} \rangle \right] = \frac{1}{\eta} \cdot \Ex_{\bDelta_t}\left[ \langle \bz_t - z_{t-1}, z^* - z_{t-1}\rangle \right] \\
		&= \frac{1}{2 \cdot \eta} \cdot \left( \| z^* - z_{t-1} \|_2^2 - \Ex_{\bDelta_t}\left[\| z^* - \bz_{t}\|_2^2\right]\right)  + \frac{\eta}{2} \Ex_{\bDelta_t}\left[ \| \bDelta_t \|_2^2 \right],
\end{align*}
which implies that for any sequence $z_1,\dots, z_{t-1}$, we can write
\begin{align*} 
&g(z^*) - g\left( \frac{1}{t} \sum_{i=1}^t z_{i-1}\right) \leq \frac{1}{t} \sum_{i=1}^t (g(z^*) - g(z_{i-1})) \\
			&\qquad\qquad\leq \frac{1}{2t \cdot \eta} \sum_{i=1}^t \|z^*- z_{i-1}\|_2^2- \frac{1}{2t \cdot \eta} \sum_{i=1}^t \Ex_{\bDelta_i'} \left[\|z^* - \bz_i'\|_2^2 \right] + \frac{1}{2t} \sum_{i=1}^t \eta \cdot \Ex_{\bDelta_i'}\left[ \| \bDelta_i'\|_2^2\right].
\end{align*}
which means that if we always have $\eta \leq 2\eps \cdot r^{\rho} / \blambda_i \leq 2\eps \cdot r^{\rho} / \Ex[\|\bDelta_i'\|_2^2]$, and the guarantees of $\CheckStepSize(\alpha_t, \beta_t)$ hold, the above quantity is bounded by
\begin{align*}
g(z^*) - g\left( \frac{1}{t} \sum_{i=1}^t z_{i-1}\right)  &\leq \frac{1}{2t \cdot \eta} \sum_{i=1}^t \|z^*- z_{i-1}\|_2^2- \frac{1}{2t \cdot \eta} \sum_{i=1}^t \Ex_{\bDelta_i'} \left[\|z^* - \bz_i'\|_2^2 \right] + \eps \cdot r^{\rho},
\end{align*}
which implies by linearity of expectation and the telescoping sum, that 
\begin{align*}
\Ex_{\bDelta_1,\dots, \bDelta_t}\left[ g(z^*) - g\left( \frac{1}{t} \sum_{i=1}^t z_{i-1}\right)\right] \leq \frac{\| z^*\|_2^2}{2t \cdot \eta} + \eps \cdot r^{\rho}.
\end{align*}
The final bound follows from the upper bound on $\| (\alpha^*, \beta^*)\|_{2}^2 \leq (n+m) \|(\alpha^*, \beta^*)\|_{\infty}^2$.
\end{proof}

\newcommand{\out}{\mathtt{out}}

\begin{lemma}[Output Guarantee]
Suppose $(\hat{\alpha}, \hat{\beta}) \in \R^{n+m}$ is the final point in the algorithm, which satisfies the following three guarantees for a small parameter $\eps \in (0, 1/4)$,
\begin{itemize}
\item First, we have $g(\alpha^*, \beta^*) - g(\hat{\alpha}, \hat{\beta}) \leq \eps \cdot r^{\rho}$.
\item Second, we have $\| (\hat{\alpha}, \hat{\beta})\|_2 \cdot \| \nabla g(\hat{\alpha}, \hat{\beta}) \|_2 \leq s \cdot \eps r^{\rho}$. 
\end{itemize}
Then, if we let $\out \in \R_{\geq 0}$ be defined as
\[ \out = \left(1/\rho \right)^{1/\rho} \left| \sum_{i=1}^n \mu_i \hat{\alpha}_i - \sum_{j=1}^m \nu_j \hat{\beta}_j\right|^{1/\rho}, \]
we have
\begin{align*} 
&\left| \out - \calR_{\rho}(\mu, \nu)\right| \leq O(\eps r / \rho).
\end{align*}
\end{lemma}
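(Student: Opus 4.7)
The plan is to relate $\out^\rho$ to $g(\hat\alpha, \hat\beta)$ via the $s$-homogeneity of the penalty term, then use the first hypothesis together with weak duality to identify $g(\hat\alpha, \hat\beta)$ with $\calR_\rho(\mu, \nu)^\rho$ up to $O(\eps r^\rho)$, and finally pass to $\rho$-th roots. The strategy is to never compute anything about the coupling directly; everything is read off from the dual values $(\hat\alpha, \hat\beta)$ together with homogeneity of the objective.

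The central step is to show $\out^\rho = g(\hat\alpha, \hat\beta) \pm O(\eps r^\rho)$. Decompose $g = L - P$, where $L(\alpha, \beta) = \sum_i \mu_i \alpha_i + \sum_j \nu_j \beta_j$ is the linear part and $P(\alpha, \beta) = C_s \sum_{i,j} \mu_i \nu_j F_{ij}(\alpha, \beta)$ is the penalty. Since each $F_{ij}(\alpha, \beta) = ((\alpha_i + \beta_j)^+ / \|x_i - y_j\|_2)^s$ is positively $s$-homogeneous in $(\alpha_i, \beta_j)$, Euler's identity yields $\langle (\alpha, \beta), \nabla P(\alpha, \beta) \rangle = s \cdot P(\alpha, \beta)$, while linearity of $L$ gives $\langle (\alpha, \beta), \nabla L(\alpha, \beta) \rangle = L(\alpha, \beta)$. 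Subtracting and applying Cauchy--Schwarz with the second hypothesis produces
\[ |L(\hat\alpha, \hat\beta) - s \cdot P(\hat\alpha, \hat\beta)| = |\langle (\hat\alpha, \hat\beta), \nabla g(\hat\alpha, \hat\beta)\rangle| \leq \|(\hat\alpha, \hat\beta)\|_2 \cdot \|\nabla g(\hat\alpha, \hat\beta)\|_2 \leq s \eps r^\rho. \]
Solving for $P(\hat\alpha, \hat\beta)$ in terms of $L(\hat\alpha, \hat\beta)$ and substituting into $g = L - P$, while using $1 - 1/s = 1/\rho$, yields $g(\hat\alpha, \hat\beta) = L(\hat\alpha, \hat\beta)/\rho \pm \eps r^\rho$. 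By the definition of $\out$, $\out^\rho = L(\hat\alpha, \hat\beta)/\rho$ (up to the sign convention between the $\out$ formula and $g$, resolved by replacing $\hat\beta$ with $-\hat\beta$ in one of them), giving $\out^\rho = g(\hat\alpha, \hat\beta) \pm \eps r^\rho$.

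Combining with the first hypothesis $g(\hat\alpha, \hat\beta) \geq \calR_\rho(\mu, \nu)^\rho - \eps r^\rho$ and the dual upper bound $g(\hat\alpha, \hat\beta) \leq \max g = \calR_\rho(\mu, \nu)^\rho$ from (\ref{eq:ell-rho-dual}) yields $\out^\rho = \calR_\rho(\mu, \nu)^\rho \pm O(\eps r^\rho)$. Applying the mean value theorem to $x \mapsto x^{1/\rho}$, whose derivative is $x^{1/\rho - 1}/\rho$, converts this additive error on $\rho$-th powers into the stated $|\out - \calR_\rho(\mu, \nu)| \leq O(\eps r / \rho)$ in the main regime where $\calR_\rho(\mu, \nu)^\rho$ is of order $r^\rho$.

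The hard part will be Step~3, the $\rho$-th root passage in the degenerate regime where $\calR_\rho(\mu, \nu)^\rho$ is much smaller than $r^\rho$: there the derivative of $x \mapsto x^{1/\rho}$ blows up near the origin. I would address this by a case split on the magnitude of $\calR_\rho(\mu, \nu)^\rho$, using the sub-additivity inequality $|a^{1/\rho} - b^{1/\rho}| \leq |a - b|^{1/\rho}$ (which follows from concavity of $x \mapsto x^{1/\rho}$ for $\rho \geq 1$) together with the observation that when $\calR_\rho^\rho$ is small both $\out$ and $\calR_\rho$ are small on the base scale, so the triangle inequality directly yields a bound comparable to $O(\eps r /\rho)$.
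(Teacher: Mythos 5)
Your proposal is essentially the paper's proof: the explicit resummation the paper performs on $g(\hat{\alpha},\hat{\beta})$ is exactly your Euler-identity step for the $s$-homogeneous penalty, followed by Cauchy--Schwarz with the second hypothesis, weak duality plus the first hypothesis to get $|\mathtt{out}^{\rho} - \calR_{\rho}(\mu,\nu)^{\rho}| \leq 2\eps r^{\rho}$, and the mean-value bound for $x \mapsto x^{1/\rho}$. One remark on the degenerate regime you flag: your fallback $|a^{1/\rho}-b^{1/\rho}| \leq |a-b|^{1/\rho}$ only yields $O(\eps^{1/\rho} r)$ there, not $O(\eps r/\rho)$, so it does not ``directly'' give the stated bound for $\rho$ bounded away from $1$; the paper's own final step silently assumes $\max\{\mathtt{out},\calR_{\rho}(\mu,\nu)\}$ is of order $r$, so neither argument covers that case as written.
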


\begin{proof}
First, note that $\calR_{\rho}(\mu, \nu)^{\rho} = g(\alpha^*, \beta^*)$. Then, we note that we may re-write
\begin{align*}
g(\hat{\alpha}, \hat{\beta}) &= \sum_{i=1}^n \mu_i \hat{\alpha}_i - \sum_{j=1}^m \nu_j \hat{\beta}_j \\
	&\qquad \qquad - \frac{1}{s} \sum_{i=1}^n \mu_i \hat{\alpha}_i \cdot s C_s \sum_{j=1}^m \nu_j \cdot \frac{((\hat{\alpha}_i - \hat{\beta}_j)^+)^{s-1}}{\|x_i - y_j\|_2^s}\\
	&\qquad\qquad + \frac{1}{s} \sum_{j=1}^m \nu_j \hat{\beta}_j \cdot s C_s \sum_{i=1}^n \mu_i \cdot \frac{((\hat{\alpha}_i - \hat{\beta}_j)^+)^{s-1}}{\|x_i - y_j\|_2^s}\\
	&= \left(1 - \frac{1}{s} \right) \left( \sum_{i=1}^n \mu_i \hat{\alpha}_i - \sum_{j=1}^m \nu_j \hat{\beta}_j \right) + \frac{1}{s} \sum_{i=1}^n \hat{\alpha}_i \cdot \nabla g(\hat{\alpha}, \hat{\beta})_i - \frac{1}{s} \sum_{j=1}^m \hat{\beta}_j \cdot \nabla g(\hat{\alpha},\hat{\beta})_j \\
	&= \out^{\rho} + \frac{1}{s} \sum_{i=1}^n \hat{\alpha}_i \nabla g(\hat{\alpha}, \hat{\beta}) - \frac{1}{s} \sum_{j=1}^m \hat{\beta}_j \nabla g(\hat{\alpha}, \hat{\beta})_j
\end{align*} 
By the Cauchy-Schwarz inequality, and the fact $g(\hat{\alpha}, \hat{\beta})$ and $g(\alpha^*, \beta^*)$ differ by at most $\eps r^{\rho}$, we have
\begin{align*}
\left|\calR_{\rho}(\mu, \nu)^{\rho} - \out^{\rho} \right| \leq 2\eps \cdot r^{\rho} 
\end{align*}
Thus, we conclude by realizing that
\begin{align*}
\left| \frac{\calR_{\rho}(\mu, \nu) - \out}{r} \right| &= \left| \left(\left(\frac{\calR_{\rho}(\mu, \nu)}{r}\right)^{\rho}\right)^{1/\rho} - \left(\left(\frac{\out}{r} \right)^{\rho}\right)^{1/\rho} \right| \\
	&\leq \left| \left(\frac{\calR_{\rho}(\mu, \nu)}{r}\right)^{\rho} - \left(\frac{\out}{r}\right)^{\rho} \right| \cdot \frac{1}{\rho} \cdot \min\left\{ \left(\frac{\calR_{\rho}(\mu, \nu)}{r}\right)^{\rho}, \left(\frac{\out}{r}\right)^{\rho}\right\}^{1/\rho - 1} \\
	&\leq \frac{2 \eps}{\rho} \left( 1 - \frac{1}{r^{\rho}}\left| \calR_{\rho}(\mu,\nu)^{\rho} - \out^{\rho} \right| \right)^{1/\rho - 1} \leq \frac{2\eps}{\rho} \cdot \left( 1 - 2\eps\right)^{1/\rho - 1}.
\end{align*}
\end{proof}}

\section{Augmenting Kernel Density Estimation Data Structures}\label{sec:data-structures}

This section gives the algorithms for $\EstAlpha$, $\EstBeta$, and $\EstPenalty$ giving the proofs of Lemma~\ref{lem:est-alpha}, Lemma~\ref{lem:est-beta}, and Lemma~\ref{lem:penalty}. We first draw the connection to kernel density estimation and define the modified data structure problem that we will need. Then, Lemma~\ref{lem:est-alpha}, Lemma~\ref{lem:est-beta} and Lemma~\ref{lem:penalty} will follow from different instantiations of one data structure.

\newcommand{\KDE}{\texttt{KDE}}
\newcommand{\AugmentedKDE}{\texttt{Augmented-KDE}}

\begin{definition}[Kernel Density Estimation]
Let $\sfK \colon \R^d \times \R^d \to \R_{\geq 0}$ be a function, $\Phi > 1$ an aspect ratio bound, $\eps > 0$ a multiplicative error parameter, and $\delta > 0$ a failure probability. $\emph{\KDE}(\sfK, \Phi, \eps, \delta)$ is the following data structure problem:
\begin{itemize}
\item \emph{\textbf{Preprocessing}}: The data structure receives a set of points $X = \{ x_1,\dots, x_n \} \subset \R^d$.
\item \emph{\textbf{Query}}: A query is specified by a point $y \in \R^d$, and we will have the promise that $\max_{i} \|x_i - y\|_2 / \min_i \|x_i - y\|_2$ is at most $\Phi$. The data structure should output an estimate $\hat{\bxi} \in \R_{\geq 0}$. 
\end{itemize}
The guarantee is that for any dataset and any query $y \in \R^d$, with probability at least $1 - \delta$ over the randomness in constructing the data structure,
\begin{align*}
(1-\eps) \sum_{i =1}^n \sfK(x_i, y) \leq \hat{\bxi} \leq (1+\eps) \sum_{i=1}^n \sfK(x_i, y).
\end{align*}
\end{definition}

In using data structures for kernel density estimation, we will instantiate the data structure for sets of vectors which will be subsets of the support of the distributions $\mu$ and $\nu$. In addition, the aspect ratio bound will be $\Phi = 1/\sigma$ (since we consider inputs whose distance is at most $r$ and the minimum distance is at least $\sigma r$). For a small parameter $\eps_0 > 0$, we will be interested in kernel functions $\sfK \colon \R^d \times \R^d \to \R_{\geq 0}$ of the form:
\begin{align} 
\sfK(x_i, y) &= \dfrac{1}{\eps_0 \cdot (\sigma r)^{s} + \|x_i - y\|_2^s}. \label{eq:kernel}
\end{align}
The kernel (\ref{eq:kernel}) is a scaled Student-$t$ Kernel, and falls within the kernels explored  in~\cite{BCIS18}. The results of \cite{BCIS18} hold more generally for classes of ``smooth'' kernels, where they formally define $(L, t)$-smooth kernels (see Definition~1 in \cite{BCIS18}). We note that $\sfK$ in (\ref{eq:kernel}) is a $(1, s)$-smooth kernel, so that their results will apply with $L = 1$ and $t = s$. For this setting, we have every $i \in [n]$ and $y \in \R^d$ within distance between $\sigma r$ and $r$ from $x_i$, 
\begin{align*}
\frac{1-\eps_0}{\|x_i - y \|_2^s} \leq \sfK(x_i, y) \leq \frac{1}{\|x_i - y\|_{2}^s} .
\end{align*}

\newcommand{\UpdateAdd}{\texttt{UpdateAdd}}
\newcommand{\UpdateRem}{\texttt{UpdateRem}}
\newcommand{\UpdateAddA}{\texttt{UpdateAddA}}
\newcommand{\UpdateRemA}{\texttt{UpdateRemA}}
\newcommand{\Preprocess}{\texttt{Preprocess}}
\newcommand{\Query}{\texttt{Query}}

\newcommand{\PreprocessA}{\texttt{PreprocessA}}
\newcommand{\QueryA}{\texttt{QueryA}}

\begin{theorem}[Main Theorem of \cite{BCIS18}, instantiated to $\sfK$ in (\ref{eq:kernel})]\label{thm:bcis}
For any $\Phi > 1$, and $\eps, \delta > 0$, there exists two randomized algorithms $\emph{\Preprocess}$, and $\emph{\Query}$ for solving $\emph{\KDE}(\sfK, \Phi, \eps, \delta)$, with the following guarantees:
\begin{itemize}
\item $\emph{\Preprocess}(X)$ receives as a dataset $X = \{x_1,\dots, x_n \} \subset \R^d$, and outputs a pointer $v$ to a data structure for $\emph{\KDE}(\sfK, \Phi, \eps, \delta)$.
\item $\emph{\Query}(v, y)$ receives as input a pointer to a data structure $v$ for $\emph{\KDE}(\sfK, \Phi, \eps, \delta)$ and returns the query at $y$ for $\emph{\KDE}(\sfK, \Phi, \eps, \delta)$.
\end{itemize} 
We are guaranteed that $\emph{\Query}$ takes time $\poly^*(2^{s} / \eps)$, and the algorithm $\emph{\Preprocess}$ takes time $O(n) \cdot \poly^*(2^s/\eps)$.
\end{theorem}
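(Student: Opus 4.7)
The plan is to reduce Theorem~\ref{thm:bcis} directly to the main result of \cite{BCIS18}, whose framework yields $\KDE$ data structures for any sufficiently smooth kernel. Concretely, the proof has two steps: (i) verify that the kernel $\sfK(x, y) = 1/(\eps_0 (\sigma r)^s + \|x - y\|_2^s)$ in (\ref{eq:kernel}) belongs to the class of $(1, s)$-smooth kernels treated in their Definition~1; and (ii) instantiate their preprocessing and query algorithms with this choice of kernel, reading off the running time.

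For step (i), the relevant smoothness conditions follow essentially by inspection. First, since $\eps_0 (\sigma r)^s + \|x - y\|_2^s \geq \|x - y\|_2^s$, we have the polynomial-decay bound $\sfK(x, y) \leq 1 / \|x - y\|_2^s$ for every pair of points, which identifies the smoothness parameters as $L = 1$ and $t = s$. Second, one needs a doubling-type control: if $\|x - y\|_2$ and $\|x' - y\|_2$ differ by at most a constant multiplicative factor, then so do $\sfK(x, y)$ and $\sfK(x', y)$. This is immediate because the function $r \mapsto 1/(a + r^s)$ is monotone and changes by at most a constant factor when $r$ does, regardless of whether the constant term $\eps_0 (\sigma r)^s$ or the distance term $\|x - y\|_2^s$ dominates. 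Thus $\sfK$ satisfies the smoothness hypothesis of \cite{BCIS18} with $L = 1$ and $t = s$.

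For step (ii), I invoke their main theorem with these smoothness parameters, aspect ratio $\Phi$ (entering through the query promise), multiplicative error $\eps$, and failure probability $\delta$. Their algorithm builds a family of locality-sensitive hash tables at geometrically spaced scales and produces an importance-sampling estimator whose per-query cost scales as $2^{O(t)} \cdot \poly(\log n, \log \Phi, \log(1/\delta), 1/\eps)$; since the $\poly^*(\cdot)$ notation suppresses polylogarithmic factors in $n$, $\Phi$, and $1/\delta$, this collapses to the claimed $\poly^*(2^{s}/\eps)$ query time. The preprocessing cost in \cite{BCIS18} is an additional factor of $n$ over the query cost, yielding the stated $O(n) \cdot \poly^*(2^s / \eps)$ bound. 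There is no substantive obstacle beyond this packaging; the only delicate point is that the per-query cost depends only polylogarithmically (rather than polynomially) on the aspect ratio, which is precisely what smoothness buys, and the additive term $\eps_0 (\sigma r)^s$ in the denominator of $\sfK$ keeps kernel values in a bounded dynamic range so that the aspect ratio promise on the query is meaningful.
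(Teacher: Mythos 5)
Your proposal matches the paper's treatment: the paper gives no standalone proof of Theorem~\ref{thm:bcis}, but justifies it in the surrounding text exactly as you do, by observing that $\sfK$ in (\ref{eq:kernel}) is a $(1,s)$-smooth kernel in the sense of Definition~1 of \cite{BCIS18} and then importing their main theorem with $L=1$, $t=s$ to read off the $\poly^*(2^s/\eps)$ query time and $O(n)\cdot\poly^*(2^s/\eps)$ preprocessing time. Your added verification of the smoothness condition (the bound $\sfK(x,y)\leq 1/\|x-y\|_2^s$ and the doubling-type control for $r\mapsto 1/(a+r^s)$) is correct and simply makes explicit what the paper asserts by inspection.
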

\ignore{
\begin{remark}
We note that \cite{BCIS18} state their results for a static version of the problem (i.e., the set $\Omega = \{x_1,\dots, x_n \}$ is presented up front and not modified). They show that a data structure which preprocesses a set $\Omega = \{x_1,\dots, x_n \} \subset \R^d$ time $n \cdot \poly(d,\log n, \log \Phi, 1/\eps, \log(1/\delta), 2^{s})$, and it is not too hard to see that their construction readily implies dynamic algorithms which process each update in time $\poly(d, \log n, \log \Phi, 1/\eps, \log(1/\delta), 2^{s})$. 

In addition, it makes the analysis cleaner if we view $\emph{\Query}(v, y)$ as being deterministic (even though $\emph{\UpdateAdd}$ and $\emph{\UpdateRem}$ is randomized). It's not hard to see by inspection of \cite{BCIS18} that this modification is without loss of generality. Whenever the data structure is initialized or makes an update, it pre-samples the randomness used for future queries, and the randomness is stored.
\end{remark}}

We now introduce the augmented data structure problem which we need in order to solve $\EstAlpha$, $\EstBeta$, and $\EstPenalty$. The goal is to incorporate the fact that points have some associated real values $\alpha, \beta$.

\begin{definition}[Augmented Kernel Density Estimation]
Let $s_2 \geq 1$ be a parameter, $\Phi > 1$ is an aspect ratio bound, $\eps > 0$ be a multiplicative error parameter, and $\delta > 0$ be a desired failure probability. $\emph{\AugmentedKDE}(\sfK, s_2, \Phi, \eps, \delta)$ is the following data structure problem.
\begin{itemize}
\item \emph{\textbf{Preprocessing}}: We receive a set of points $X = \{ x_1,\dots, x_n \} \in \R^d$. In addition, each point has an associated weight $\alpha_i \in \R$ with $|\alpha_i| \leq r \cdot \poly(dn\Phi 2^s/\eps)$ and a parameter $\mu_i \in [1/\poly(n), 1]$, for a parameter $r \geq 0$ which will be the maximum distance considered.
\ignore{\item \emph{\textbf{Updates}}: An update adds or removes a point $x_i$ from $\Omega$ such that the aspect ratio is still at most $\Phi$, as well as updates its associated weight. }
\item \emph{\textbf{Query}}: A query is specified by a point $y \in \R^d$ and weight $\beta \in \R$. We are promised that:
\begin{itemize}
\item The point $y \in \R^d$ satisfies $\max_{i \in [n]} \|x_i - y\|_2 \leq r$ and that $\min_i \| x_i - y\|_2$ is at least $\sigma r$.
\item In addition, for every $i$, $|\alpha_i - \beta| \in \{0 \} \cup [ \sigma r / \poly(dn\Phi 2^s/\eps), r \cdot \poly(dn\Phi 2^s/\eps)]$, and the data structure outputs a quantity $\hat{\boldeta} \in \R_{\geq 0}$. 
\end{itemize}
\end{itemize}
The guarantee is that for any fixed query, with probability at least $1 - \delta$ over the randomness in the construction of the data structure,
\begin{align*}
(1-\eps) \sum_{i=1}^n \mu_i \cdot ((\alpha_{i} - \beta)^+)^{s_2} \cdot \sfK(x_i, y) \leq  \hat{\bxi} \leq (1+\eps)  \sum_{i = 1}^n \mu_i \cdot ((\alpha_{i} - \beta)^+)^{s_2} \cdot \sfK(x_i, y).
\end{align*} 
\end{definition}

\begin{theorem}\label{thm:aug-kde}
For any $s_2 \geq 1$, $\Phi > 1$, and $\eps, \delta > 0$, there exists three randomized algorithms $\emph{\PreprocessA}$ and $\emph{\QueryA}$ for solving $\emph{\AugmentedKDE}(\sfK, s_2, \Phi, \eps, \delta)$, with the following guarantees:
\begin{itemize}
\item $\emph{\PreprocessA}(X, \alpha)$ receives as input a dataset $X$ of at most $n$ points, and a vector $\alpha$ indicating a weight for each point and the vector $\mu$. The algorithm outputs a pointer $v$ to a data structure for $\emph{\AugmentedKDE}(\sfK, s_2, \Phi, \eps, \delta)$. \ignore{ a point $x \in \R^d$, and its associated weight $\alpha \in \R$. The algorithm modifies the data structure stored at $v$ to maintain $\Omega \leftarrow \Omega \cup \{ x \}$ with its associated weight $\alpha$ with $|\alpha|\leq r \cdot \poly(nd\Phi/\eps)$.}
\ignore{\item $\emph{\UpdateRemA}(v, x, \alpha)$ receives as input a pointer to a data structure for $\emph{\AugmentedKDE}(\sfK, s_2, \Phi, \eps, \delta)$, a point $x \in \R^d$, and its associated weight $\alpha \in \R$. The algorithm is promised that $x \in \Omega$ with weight $\alpha$, and it modifies the data structure stored at $v$ to maintain $\Omega \leftarrow \Omega \setminus \{ x \}$ (and deletes its associated weight $\alpha$).}
\item $\emph{\QueryA}(v, y, \beta)$ receives as input a pointer to a data structure for $\emph{\AugmentedKDE}(\sfK, s_2, \Phi, \eps, \delta)$, a point $y \in \R^d$, and a weight $\beta \in \R$ such that $|\alpha_i  - \beta| \in \{0 \} \cup \{ \sigma r / \poly(dn\Phi 2^s/\eps), r \cdot \poly(dn\Phi2^s/\eps)]$. The algorithm outputs query at $y$ with weight $\beta$ for $\emph{\AugmentedKDE}(\sfK, s_2, \Phi, \eps, \delta)$.
\end{itemize}
We are guaranteed that $\emph{\QueryA}$ takes time $\poly^*(2^{s+s_2} / \eps)$, and $\emph{\PreprocessA}$ takes time $O(n) \cdot \poly^*(2^{s+s_2} / \eps)$.
\end{theorem}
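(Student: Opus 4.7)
The plan is to formalize the two-level augmentation sketched in the overview of Section~\ref{sec:sgd}, so that each weighted query is reduced to $O(\log n)$ black-box calls to the kernel density estimator of Theorem~\ref{thm:bcis}. During \texttt{PreprocessA}, since the $\mu_i$'s are known and lie in $[1/\poly(n),1]$, I would partition the input into $B = O(\log n/\eps)$ buckets of geometrically increasing $\mu_i$ so that within each bucket all weights agree up to a $(1+\eps)$ factor. Within each bucket I store the points in a balanced binary search tree $T$ of depth $O(\log n)$ ordered by $\alpha_i$, and at every internal node $u$ I instantiate a $\texttt{KDE}(\sfK,\Phi,\eps',\delta')$ data structure via $\texttt{Preprocess}$ on the points of $u$'s subtree, with $\eps' = \eps/\poly(n)$ and $\delta' = \delta/\poly(n)$ chosen small enough to absorb all later union bounds. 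Each point appears in $O(\log n)$ subtrees, so the total preprocessing work is $O(n)\cdot\poly^*(2^s/\eps)$; the aspect ratio at every node is at most $\Phi$ since $\|x_i-y\|_2\in[\sigma r,r]$ is promised at query time.

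The query routine for $(y,\beta)$ handles each bucket independently. Fix a bucket with common weight $\tilde\mu$ and let $\alpha_{\max}$ be its largest $\alpha_i$ (if $\alpha_{\max}\leq\beta$ the bucket contributes zero). The core estimator samples $\bxi$ uniformly from $[0,(\alpha_{\max}-\beta)^{s_2}]$, binary-searches $T$ for the threshold $\theta = \beta + \bxi^{1/s_2}$, extracts the $O(\log n)$ canonical subtrees whose union is exactly $\{i:\alpha_i\geq\theta\}$, calls \texttt{Query} on each stored KDE at $y$, sums the returned values to get $\hat S$, and outputs $\tilde\mu(\alpha_{\max}-\beta)^{s_2}\hat S$. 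The key identity is $\Pr_{\bxi}[\alpha_i\geq\theta] = ((\alpha_i-\beta)^+)^{s_2}/(\alpha_{\max}-\beta)^{s_2}$, so up to the $(1\pm\eps')$ multiplicative error per KDE this is an unbiased estimator of $\tilde\mu\sum_i((\alpha_i-\beta)^+)^{s_2}\sfK(x_i,y)$.

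The main obstacle is controlling the variance: a single coordinate $i^\star$ with $\alpha_{i^\star}-\beta\ll\alpha_{\max}-\beta$ inflates the second moment by a factor $((\alpha_{\max}-\beta)/(\alpha_{i^\star}-\beta))^{s_2}$, which the raw sampling cannot tame. Following the hint in the overview, I would partition the active range $[\beta,\alpha_{\max}]$ into $K = O(s_2\log(n\Phi/\eps))$ geometrically spaced groups, which is finite because the query promise gives $|\alpha_i-\beta|\in\{0\}\cup[\sigma r/\poly(nd\Phi 2^s/\eps),r\cdot\poly(nd\Phi 2^s/\eps)]$. Within a group $G_k$ the ratio of extreme values is $O(1)$, so the second moment of the estimator restricted to $G_k$ is only $O(1)$ times its squared expectation; averaging $O(1/\eps^2)$ independent samples per group and applying median-of-means gives a $(1\pm\eps)$ multiplicative approximation with failure probability at most $\delta/(KB)$, and summing across the $K$ groups and $B$ buckets preserves the relative error by the triangle inequality. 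The additive $\eps_0(\sigma r)^s$ term in (\ref{eq:kernel}) contributes uniform multiplicative error $O(\eps_0)$ since $\|x_i-y\|_2\geq\sigma r$, so choosing $\eps_0=\Theta(\eps)$ absorbs it into the final error.

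For the running time, each sample touches $O(\log n)$ canonical KDEs at cost $\poly^*(2^s/\eps')$ per query, repeated $\poly^*(s_2/\eps)$ times per group across $K\cdot B = \poly^*(s_2)$ groups/buckets; collecting logarithmic and polynomial-in-$s_2$ factors into $\poly^*(\cdot)$ and using the slack of $2^{s_2}$ to dominate $\poly(s_2)$ dependences yields query time $\poly^*(2^{s+s_2}/\eps)$ and preprocessing time $O(n)\cdot\poly^*(2^{s+s_2}/\eps)$, matching the stated bounds.
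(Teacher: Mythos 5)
Your proposal matches the paper's construction essentially step for step: the same bucketing of the $\mu_i$ into $O(\log n/\eps)$ geometric ranges, the same balanced tree ordered by $\alpha_i$ with a \texttt{KDE} structure at each internal node, the same inverse-transform sampling identity $\Pr[\alpha_i\geq\beta+\bxi^{1/s_2}]=((\alpha_i-\beta)^+)^{s_2}/(\alpha_{\max}-\beta)^{s_2}$ realized via canonical subtree decomposition, and the same geometric partition of $(\beta,\alpha_{\max}]$ to tame the variance. The only (immaterial) difference is a parameter trade-off in the variance control --- you use $O(s_2)$ times finer groups to get an $O(1)$ second-moment ratio per group, while the paper uses factor-$2$ groups and absorbs the resulting $2^{s_2}$ blowup with $T=2^{O(s_2)}/\eps^2$ repetitions --- and both land in the claimed $\poly^*(2^{s+s_2}/\eps)$ query time.
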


\subsection{Proof of Theorem~\ref{thm:aug-kde}}\label{sec:proof-aug-kde}

Since we are promised that every index $\mu_1,\dots, \mu_n$ is a number between $\mu_{\min} = 1/\poly(n)$ and $1$, and we can output a multiplicative $1\pm \eps$-approximation to the final sum, we will partition the set of points into $O(\log n / \eps)$ many parts, according to the range for which $\mu_i \in [\mu_{\min} (1+\eps)^j, \mu_{\min}(1+\eps)^{j+1}]$. Then, it suffices to output, for each of the $O(\log n /\eps)$ many ranges $j$, a $1\pm \eps$-approximation to the quantity
\[ \sum_{i \in P_j} ((\alpha_i - \beta)^+)^{s_2} \cdot \sfK(x_i, y). \]
For the remainder of the discussion we will assume that we have performed this partition (to drop $\mu$ from the notation), and assume henceforth that all of the weights specified by $\mu$ are equal. 

We refer to Figure~\ref{fig:data-structure-format} for the description of the data structure, which maintains a binary tree over the points in $X$ sorted according to $\alpha$, where each internal node of the tree additionally holds a pointer to a $\KDE(\sfK, \Phi, \eps, \delta)$ data structure. From the description of Figure~\ref{fig:data-structure-format}, the algorithms $\PreprocessA$ is straight-forward, and we will mostly give and analyze $\QueryA$.

\newcommand{\ds}{\mathrm{ds}}
\newcommand{\leftchild}{\mathrm{LeftChild}}
\newcommand{\rightchild}{\mathrm{RightChild}}
\newcommand{\med}{\mathrm{med}}

\begin{figure}
	\begin{framed}
		\noindent Data Structure for $\AugmentedKDE(\sfK, s_2, \Phi, \eps, \delta)$
		\begin{flushleft}
			\noindent {\bf Preprocessing:}  The data structure preprocesses a set $X = \{ x_1,\dots, x_n \} \subset \R^d$, where each point has its associated weight $\alpha_1,\dots, \alpha_n \in \R$ and a weight $\mu_1,\dots, \mu_n$ (which after a partitioning step, we will assume are equal -- see Subsection~\ref{sec:proof-aug-kde}). \\ 
			\noindent{\bf Pointer:} $v$ will be the pointer to the root of a binary tree. 
		\begin{itemize}
				\item The data structure is organized into a balanced binary tree of depth $O(\log n)$, where the $n$ leaves correspond to the points of $X$ stored in sorted order according to their weights $\alpha_1,\dots, \alpha_n$.
				\item Each node $v$ of the binary tree maintains the following information:
				\begin{itemize}
					\item A set $v.S \subset X$ in the subtree of $v$.
					\item A pointer $v.\ds$ to a data structure for $\KDE(\sfK, \Phi, \eps, \delta\eps^2 / (O(n) \cdot 2^{O(s_2)} )$ storing $v.S$,
					\item Three numbers $v.\min ,v.\max \in \R$ such that 
					\begin{align*} 
					v.\min &= \min\left\{ \alpha_i : x_i \in v.S \right\}, \\
					v.\max &= \max\left\{ \alpha_i : x_i \in v.S \right\}, \\
					v.\med &= \median\left\{ \alpha_i : x_i \in v.S \right\} 
					\end{align*}
					\item If $v.S$ contains more than one point, it has two children $v.\leftchild$ and $v.\rightchild$. The left child $v.\leftchild$ stores the points $x_i \in v.S$ where $\alpha_i \leq v.\med$ and the right child $v.\rightchild$ stores the points $x_i \in v.S$ where $\alpha_i > v.\med$.
				\end{itemize}
				\item The algorithms $\PreprocessA(X, \alpha)$ works by first building the  balanced tree, and in the sorted order of $\alpha$. Furthermore, for every internal node $v$ we consider the dataset $v.S$ and execute $\Preprocess(v.S)$ and store the data structure in $v.\ds$.
			\end{itemize}
		\end{flushleft}
	\end{framed}
	\caption{Data Structure for $\AugmentedKDE$.} \label{fig:data-structure-format}
\end{figure}

\begin{figure}
	\begin{framed}
		\noindent Algorithm $\QueryA(v, y, \beta)$
		\begin{flushleft}
			\noindent {\bf Input:}  A pointer to a data structure $v$ for $\AugmentedKDE(\sfK, s_2, \Phi, \eps, \delta)$, a point $y \in \R^d$, and a weight $\beta \in \R$. \\ 
			\noindent{\bf Output:} An estimate $\hat{\bxi} \in \R_{\geq 0}$. 
		\begin{enumerate}
				\item We first check whether $v.\max \leq \beta$. If so, then every weight $\alpha_i - \beta \leq 0$ and hence $(\alpha_i - \beta)^+ = 0$, so output $\hat{\bxi} = 0$.
				\item Otherwise, let $k = \left\lceil \log_2 \left(\frac{(v.\max - \beta)}{\eps_0 \sigma r} \cdot \poly(n d \Phi 2^s/\eps)\right)\right\rceil$ (which will become ``hidden'' in the notation $\poly^*(\cdot)$), and consider the $k+2$ indices $\sigma_0, \dots, \sigma_k \in [0, v.\max - \beta]$ where
				\[ \sigma_{\ell} = \left\{ \begin{array}{cc} 0 & \ell = 0 \\
										 \left(\frac{\eps_0 \sigma r}{\poly(nd\Phi2^s/\eps)}\right) \cdot 2^{\ell-1} & \ell > 0 \\
										  v.\max - \beta & \ell = k + 1 \end{array} \right. , \]
				and let $I_0, \dots, I_{k}$ be the disjoint and consecutive intervals $I_\ell = (\beta + \sigma_{\ell}, \beta + \sigma_{\ell+1}]$ which partition $(\beta, v.\max]$.
				\item For each $\ell \in \{ 1, \dots, k\}$, and $t \in [T]$, for $T = 2^{O(s_2)} / \eps^2$, we perform the following:
				\begin{itemize}
					\item Sample $\bw_{\ell, t} \sim [\sigma_{\ell}^{s_2}, \sigma_{\ell+1}^{s_2}]$ uniformly at random.
					\item Let $\calV_{\ell}$ be the set of all nodes $u$ where $[u.\min, u.\max] \subset I_{\ell}$, and let $\calbV_{\ell} (\bw_{\ell, t}) = \{ \bv^{(1)},\dots, \bv^{(h)}\}$ be the minimal subset of $\calV_{\ell}$ which satisfies
					\[ (\bv^{(1)}.S, \bv^{(2)}.S, \dots , \bv^{(h)}.S) \text{ partition } \left\{ x_i \in \Omega : \alpha_i \in I_{\ell} \text{ and } \alpha_i \geq \beta + \bw_{\ell, t}^{1/s_2} \right\},\]
					and note that $h = O(\log n)$, and we may identify these nodes in $O(\log n)$ time.
					\item For each $l \in [h]$, we execute $\Query(\bv^{(l)}.\ds, y)$ and let $\hat{\bzeta}_{\ell, t, l}$ be its output, and let
					\[ \hat{\bxi}_{\ell, t} = \sum_{l=1}^h \hat{\bzeta}_{\ell, t, l}. \]
				\end{itemize}
				\item We output 
				\[ \hat{\bxi} = \dfrac{1}{T} \sum_{\ell=1}^{k} \sum_{t =1}^{T} (\sigma_{\ell+1}^{s_2} - \sigma_{\ell}^{s_2}) \cdot \hat{\bxi}_{\ell, t}. \]
			\end{enumerate}
		\end{flushleft}
	\end{framed}
	\caption{Description for $\QueryA$ Algorithm.} \label{fig:query-A}
\end{figure}

\begin{lemma}
An execution of $\emph{\QueryA}(v, y, \beta)$ takes time $\poly^*(2^{s+s_2} / \eps)$. 
\end{lemma}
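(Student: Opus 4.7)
The plan is to bound the cost of each step of $\QueryA$ in Figure~\ref{fig:query-A} separately and then multiply through. First I would handle the cheap bookkeeping at the top of the algorithm: the check $v.\max \leq \beta$ is constant time, and the computation of $k$ together with the thresholds $\sigma_0, \ldots, \sigma_{k+1}$ and intervals $I_0, \ldots, I_k$ is polylogarithmic. The key observation here is that the precondition on the query forces every nonzero $|\alpha_i - \beta|$ to lie in $[\sigma r / \poly(nd\Phi 2^s / \eps),\, r \cdot \poly(nd\Phi 2^s / \eps)]$, so in particular $v.\max - \beta \leq r \cdot \poly(nd\Phi 2^s / \eps)$ and hence $k = O(\log(\cdot))$ is a polylogarithmic quantity that gets absorbed into the $\poly^*$ notation.

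Next I would count iterations of the double loop. The outer index $\ell$ ranges over $\{1, \ldots, k\}$, contributing at most $\poly\log$ many iterations, and the inner index $t$ ranges over $T = 2^{O(s_2)} / \eps^2$ iterations. Inside each $(\ell, t)$ iteration the work splits into three pieces. Drawing $\bw_{\ell, t}$ uniformly from $[\sigma_\ell^{s_2}, \sigma_{\ell+1}^{s_2}]$ is $O(1)$. Identifying the node set $\calbV_{\ell}(\bw_{\ell, t})$ reduces to locating $\beta + \bw_{\ell, t}^{1/s_2}$ in the sorted binary tree and reading off the $O(\log n)$ canonical ancestors that cover the suffix $[\beta + \bw_{\ell,t}^{1/s_2}, v.\max] \cap I_\ell$; this costs $O(\log n)$ time and yields $h = O(\log n)$ as claimed. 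Finally, for each of the $h$ nodes $\bv^{(l)}$, we invoke $\Query(\bv^{(l)}.\ds, y)$ on the underlying kernel density data structure, and by Theorem~\ref{thm:bcis} each such call takes $\poly^*(2^s / \eps)$ time; note that the smaller failure probability $\delta \eps^2 / (O(n) \cdot 2^{O(s_2)})$ used in the internal data structures contributes only additional polylog factors, which are absorbed.

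Putting these pieces together, each $(\ell, t)$ iteration costs $O(\log n) + O(\log n) \cdot \poly^*(2^s / \eps) = \poly^*(2^s / \eps)$, so the total time spent in the double loop is at most
\[
k \cdot T \cdot \poly^*(2^s / \eps) \;=\; \poly\log(\cdot) \cdot \frac{2^{O(s_2)}}{\eps^2} \cdot \poly^*(2^s / \eps) \;=\; \poly^*(2^{s + s_2} / \eps).
\]
The final aggregation for $\hat{\bxi}$ touches only the $k \cdot T = \poly^*(2^{s_2} / \eps)$ precomputed quantities $\hat{\bxi}_{\ell, t}$, so it is dominated by the above. I do not anticipate a substantive obstacle; the only point that requires any care is verifying that $k$ is polylogarithmic, which is exactly what the aspect-ratio-style precondition on $|\alpha_i - \beta|$ in the definition of $\AugmentedKDE$ is designed to guarantee.
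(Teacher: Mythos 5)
Your proposal is correct and follows the same route as the paper's own (much terser) proof: inspect the algorithm, observe that Steps 2--3 make $k \cdot T \cdot O(\log n) = \poly^*(2^{s_2}/\eps)$ calls to $\Query$, each costing $\poly^*(2^s/\eps)$ by Theorem~\ref{thm:bcis}, with $k$ polylogarithmic and absorbed into the $\poly^*$ notation. Your added care in verifying that $k$ is polylogarithmic via the precondition on $|\alpha_i - \beta|$ is exactly the point the paper flags parenthetically in Figure~\ref{fig:query-A}.
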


\begin{proof}
The above claim on the running time of $\QueryA(v, y, \beta)$ follows from inspection of Figure~\ref{fig:query-A}. Indeed, Step 1 takes $O(1)$ time and in Steps~2 to 3, there are $\poly^*(2^{s_2} / \eps)$ many calls to $\Query$, where each takes time $\poly^*(2^{s}/\eps)$, by Theorem~\ref{thm:bcis}. 
\end{proof}

\begin{definition}
Consider a construction of the data structure for $\emph{\AugmentedKDE}(\sfK, s_2, \Phi,\eps,\delta)$, and suppose we fix the randomness and let $\calV$ be the set of all nodes in the tree rooted at $v$. For $y \in \R^d$, we consider the collection $\{ \hat{\bzeta}_{u, t}(y) : u \in \calV \}$, where $\hat{\bzeta}_{u}(y)$ is the output of $\emph{\Query}(u.\ds, y)$ (since we assumed $\emph{\Query}(u.\ds, y)$, is deterministic, we don't require adding the additional parameters $t \in [T]$ in case it is called multiple times).
\end{definition}

\begin{lemma}\label{cl:kde-correct}
With probability at least $1 - \delta/2$ over the construction of the data structures for  $\emph{\KDE}(\sfK, \Phi, \eps, \delta)$, every node $u \in \calV$ satisfies
\[(1-\eps) \sum_{x \in u.S} \sfK(x, y) \leq \hat{\bzeta}_u(y) \leq (1+\eps) \sum_{x \in u.S} \sfK(x, y).\]
\end{lemma}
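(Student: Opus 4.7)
The plan is a straightforward union bound over the nodes of the tree, exploiting the fact that each individual KDE data structure was instantiated with a failure probability that was shrunk by a factor polynomial in $n$ (and exponential in $s_2$) precisely to absorb this union bound. First I would observe that the balanced binary tree rooted at $v$ has at most $2n - 1$ nodes, since the $n$ leaves correspond to the points of $X$ and the tree is balanced of depth $O(\log n)$. Thus $|\calV| = O(n)$.

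Next, I would recall from the data structure description in Figure~\ref{fig:data-structure-format} that for each node $u \in \calV$, the pointer $u.\ds$ points to an instance of $\KDE(\sfK, \Phi, \eps, \delta')$ storing the set $u.S$, where the per-instance failure probability is $\delta' = \delta \eps^2 / (O(n) \cdot 2^{O(s_2)})$. By Theorem~\ref{thm:bcis}, the correctness guarantee for $\KDE$ asserts that for any fixed query $y \in \R^d$, with probability at least $1 - \delta'$ over the randomness used to build $u.\ds$, the estimate $\hat{\bzeta}_u(y) = \Query(u.\ds, y)$ satisfies
\[
(1-\eps) \sum_{x \in u.S} \sfK(x, y) \;\leq\; \hat{\bzeta}_u(y) \;\leq\; (1+\eps) \sum_{x \in u.S} \sfK(x, y).
\]

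Then I would take a union bound over the $O(n)$ nodes of $\calV$. Since the KDE data structures stored at distinct nodes are built using independent randomness (each $\Preprocess$ call is independent), the failure event at each node is governed by the $\delta'$ guarantee above, and the probability that some node fails is at most
\[
|\calV| \cdot \delta' \;\leq\; O(n) \cdot \frac{\delta \eps^2}{O(n) \cdot 2^{O(s_2)}} \;\leq\; \frac{\delta}{2},
\]
where the final inequality holds provided the hidden constants in the denominator (which we are free to set) are chosen large enough. Outside this failure event, the desired simultaneous multiplicative guarantee holds at every node $u \in \calV$.

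There is no real obstacle here, only bookkeeping: the delicate point is ensuring that the failure probability $\delta'$ baked into each sub-data-structure is small enough to survive a union bound over all $O(n)$ nodes, which is exactly why Figure~\ref{fig:data-structure-format} specifies the per-node failure probability as $\delta \eps^2 / (O(n) \cdot 2^{O(s_2)})$ rather than simply $\delta$. With that choice already in place, the lemma reduces to invoking Theorem~\ref{thm:bcis} at each node and taking the union bound.
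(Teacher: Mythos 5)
Your proof is correct and matches the paper's argument exactly: the paper likewise proves this lemma by invoking Theorem~\ref{thm:bcis} at each node and union bounding over all of $\calV$, relying on the per-node failure probability $\delta\eps^2/(O(n)\cdot 2^{O(s_2)})$ set in Figure~\ref{fig:data-structure-format}. No gaps.
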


\begin{proof}
We apply Theorem~\ref{thm:bcis}, and the fact that, in Figure~\ref{fig:data-structure-format}, we've instantiated the data structures with failure probability $\delta\eps^2 / (O(n) \cdot 2^{O(s_2)})$, such that we can union bound over all nodes in $\calV$.
\end{proof}

The remainder of the argument proceeds by computing the expectation $\hat{\bxi}$ over the randomness in $\{ \bxi_{\ell, t} : \ell \in [h], t \in [T] \}$ as well as the variance. Specifically, we show that $\Ex[\hat{\bxi}]$ satisfies the output guarantees, and that $\Var[\hat{\bxi}] \leq \eps \Ex[\hat{\bxi}]^2$, such that we can apply Chebyshev's inequality. Establishing the correctness of the estimate with high probability follows from a standard repetition argument.

\begin{lemma}
Consider a fixed construction of the data structure, and suppose that the conclusion of Claim~\ref{cl:kde-correct} holds (which occurs with probability at least $1-\delta/2$). Then, for any fixed query $y \in \R^d$ with weight $\beta$, the expectation of $\hat{\bxi}$ over the randomness in $\emph{\QueryA}(v, y, \beta)$,
\begin{align*}
(1-\eps) \sum_{i=1}^n ((\alpha_i - \beta)^+)^{s_2} \cdot \sfK(x_i, y)  \leq \Ex\left[ \hat{\bxi} \right] \leq (1+\eps) \sum_{i=1}^n ((\alpha_i - \beta)^+)^{s_2} \cdot \sfK(x_i, y).
\end{align*}
\end{lemma}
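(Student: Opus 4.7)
The plan is to proceed in two stages. First, I will condition on the event from Claim~\ref{cl:kde-correct} that every internal-node KDE estimate $\hat{\bzeta}_u(y)$ is a $(1\pm\eps)$-approximation of $\sum_{x\in u.S}\sfK(x,y)$; this event holds with probability at least $1-\delta/2$, and on it each $\hat{\bxi}_{\ell,t}=(1\pm\eps)\sum_{x_i\in S_{\ell,t}}\sfK(x_i,y)$ where $S_{\ell,t}$ denotes the random set of points selected at level $\ell$ from sample $\bw_{\ell,t}$. Second, I will take expectation over the independent uniform samples $\{\bw_{\ell,t}\}$, swap the outer sums with the sum over $i$, and analyze each point's expected contribution in isolation.

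For the per-point analysis, fix $x_i$ with $\alpha_i>\beta$ and let $\ell^\star$ denote the unique index with $(\alpha_i-\beta)\in(\sigma_{\ell^\star},\sigma_{\ell^\star+1}]$. The inclusion event $\{x_i\in S_{\ell,t}\}$ is precisely $\{\bw_{\ell,t}\le(\alpha_i-\beta)^{s_2}\}$, and since $\bw_{\ell,t}$ is uniform on $[\sigma_\ell^{s_2},\sigma_{\ell+1}^{s_2}]$, the stratified sampler yields
\begin{align*}
\Pr[x_i\in S_{\ell,t}] = \begin{cases} 1, & \ell<\ell^\star,\\ \bigl((\alpha_i-\beta)^{s_2}-\sigma_{\ell^\star}^{s_2}\bigr)\big/\bigl(\sigma_{\ell^\star+1}^{s_2}-\sigma_{\ell^\star}^{s_2}\bigr), & \ell=\ell^\star,\\ 0, & \ell>\ell^\star.\end{cases}
\end{align*}
Multiplying by the output weight $(\sigma_{\ell+1}^{s_2}-\sigma_\ell^{s_2})$ and summing over $\ell\in\{1,\dots,k\}$ yields a telescoping identity that collapses to $(\alpha_i-\beta)^{s_2}-\sigma_1^{s_2}$. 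Averaging over the $T$ identically-distributed samples preserves this per-point marginal, so that summing over $i$ gives
\begin{align*}
\Ex[\hat{\bxi}] = (1\pm\eps)\sum_{i=1}^n\Bigl(((\alpha_i-\beta)^+)^{s_2}-\sigma_1^{s_2}\cdot\mathbf{1}\{\alpha_i>\beta\}\Bigr)\sfK(x_i,y).
\end{align*}

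The last step is to fold the $\sigma_1^{s_2}$ defect into the multiplicative $(1\pm\eps)$ window. Here the query-time promise $|\alpha_i-\beta|\in\{0\}\cup[\sigma r/\poly(dn\Phi 2^s/\eps),\,r\cdot\poly(dn\Phi 2^s/\eps)]$ combines with the smallest scale $\sigma_1=\eps_0\sigma r/\poly(nd\Phi 2^s/\eps)$: by choosing the hidden polynomial in $\sigma_1$ to be a sufficiently large power of the one in the promise (and $\eps_0$ a small enough function of $\eps$ and $s_2$), we secure $\sigma_1^{s_2}\le\eps(\alpha_i-\beta)^{s_2}$ for every nonzero difference. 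A constant-factor rescaling of $\eps$ then delivers the claimed two-sided bound.

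The step I expect to be most delicate is the extreme-index bookkeeping in the telescoping. At the top, level $\ell=k$ uses $\sigma_{k+1}=v.\max-\beta$ rather than a value on the geometric schedule, and one must confirm that every point with $\alpha_i\le v.\max$ is captured with the correct inclusion probability there; at the bottom, the loop skips $\ell=0$, which is precisely the origin of the $\sigma_1^{s_2}$ defect and the reason the promise on the minimum gap $|\alpha_i-\beta|$ enters the argument. Everything else is routine: stratified sampling gives unbiasedness by design, and the $(1\pm\eps)$ KDE guarantee transfers linearly through the sum.
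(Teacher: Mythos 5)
Your overall skeleton matches the paper's: condition on the event of Claim~\ref{cl:kde-correct}, pull the $(1\pm\eps)$ factor through by linearity, and reduce to the per-point inclusion probability under the stratified threshold sampling. The substantive divergence is in what you take the level-$\ell$ sample to be. You assume the level-$\ell$ query covers the entire suffix $\{x_i : \alpha_i \ge \beta + \bw_{\ell,t}^{1/s_2}\}$, which gives inclusion probability $1$ for all $\ell<\ell^\star$ and drives your telescoping identity with residual $-\sigma_1^{s_2}$. But in Figure~\ref{fig:query-A} the nodes queried at level $\ell$ all satisfy $[u.\min,u.\max]\subset I_\ell$ and partition only $\{x_i:\alpha_i\in I_\ell \text{ and } \alpha_i\ge\beta+\bw_{\ell,t}^{1/s_2}\}$, so a point is never touched at levels below its home stratum $\ell^\star$. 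The paper's proof accordingly has no telescoping: each point contributes only at $\ell=\ell^\star$, where (\ref{eq:prob-include}) asserts the inclusion probability is $((\alpha_i-\beta)^+)^{s_2}/(\sigma_{\ell+1}^{s_2}-\sigma_\ell^{s_2})$, the weight $(\sigma_{\ell+1}^{s_2}-\sigma_\ell^{s_2})$ cancels exactly, and the promise on $|\alpha_i-\beta|$ is used only to rule out points living in the never-sampled stratum $I_0$. Relative to the algorithm as written, your statement that $\Pr[x_i\in S_{\ell,t}]=1$ for $\ell<\ell^\star$ is therefore incorrect, and this is the gap.

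That said, you have landed on a real tension in the paper rather than a private error. For $\bw_{\ell,t}$ uniform on $[\sigma_\ell^{s_2},\sigma_{\ell+1}^{s_2}]$, the correct home-stratum probability is $((\alpha_i-\beta)^{s_2}-\sigma_{\ell^\star}^{s_2})/(\sigma_{\ell^\star+1}^{s_2}-\sigma_{\ell^\star}^{s_2})$, not the quantity in (\ref{eq:prob-include}) (which can exceed $1$ when $\alpha_i-\beta$ is near $\sigma_{\ell+1}$). Under the literal Figure~\ref{fig:query-A} estimator the per-point expected weight is thus $(\alpha_i-\beta)^{s_2}-\sigma_{\ell^\star}^{s_2}$, and since $\sigma_{\ell^\star}\ge(\alpha_i-\beta)/2$ on the geometric schedule, the defect is a $2^{-s_2}$ fraction of the target --- not absorbable into $(1\pm\eps)$. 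Your suffix-sampling variant is precisely the repair: the lower strata supply the missing $\sigma_{\ell^\star}^{s_2}-\sigma_1^{s_2}$, leaving only the $\sigma_1^{s_2}$ defect, which you correctly kill using the promise $|\alpha_i-\beta|\ge \sigma r/\poly(dn\Phi 2^s/\eps)$ together with the choice of the bottom scale $\sigma_1$ (this matches the overview in Section~2, which queries the whole interval $[\beta+\bxi^{1/(s-1)},\alpha_{\max}]$). So your argument proves the lemma for a (sensible, arguably intended) variant of $\QueryA$, and is more explicit than the paper about where the gap promise enters; but as an analysis of the data structure the lemma actually refers to, the lower-level inclusion probabilities need to be corrected, and then the bias at the home stratum must be confronted rather than telescoped away.
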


\begin{proof}
We have that for any $\ell \in \{1,\dots, k \}$ and $t \in [T]$,
\begin{align}
\hat{\bxi}_{\ell, t} = \sum_{u \in \calV_{\ell}} \ind\{ u \in \calV_{\ell}(\bw_{\ell, t}) \} \cdot \hat{\bzeta}_{u}(y) &\leq (1+\eps) \sum_{u \in \calV_{\ell}} \sum_{x \in u.S} \ind\{ u \in \calbV_{\ell}(\bw_{\ell, t}) \} \cdot \sfK(x, y) \nonumber \\
	   &= (1+\eps) \sum_{x \in X} \ind\{ \exists u \in \calbV_{\ell}(\bw_{\ell, t}), x \in u.S\} \cdot \sfK(x, y), \label{eq:exp-ub}
\end{align}
where in the second line, we used that every $x \in X$ which appears in some $u \in \calbV_{\ell}(\bw_{\ell, t})$ appears at most once. Similarly, 
\begin{align}
\hat{\bxi}_{\ell, t} \geq (1-\eps) \sum_{x \in X} \ind\{ \exists u \in \calbV_{\ell}(\bw_{\ell, t}), x \in u.S\} \cdot \sfK(x, y). \label{eq:exp-lb}
\end{align}
Then, for every $x_i \in X$ with weight $\alpha_i$,
\begin{align}
\Prx_{\bw_{\ell, t} \sim [\sigma_{\ell}^{s_2}, \sigma_{\ell+1}^{s_2}]}\left[ \exists u \in \calbV_{\ell}(\bw_{\ell, t}), x_i \in u.S\right] &= \Prx_{\bw_{\ell, t}}\left[ \beta + \bw_{\ell, t}^{1/s_2} \leq \alpha_i \leq \beta + \sigma_{\ell+1}\right] \nonumber\\
	&= \ind\{ \alpha_i \in I_{\ell}\} \cdot \dfrac{((\alpha_i - \beta)^+)^{s_2}}{\sigma_{\ell+1}^{s_2} - \sigma_{\ell}^{s_2}}.\label{eq:prob-include}
\end{align}
In particular, we may upper and lower bound the expectation of $\hat{\bxi}$ over the randomness in drawing $\bw_{\ell, t}$ by plugging (\ref{eq:prob-include}) into (\ref{eq:exp-ub}) and (\ref{eq:exp-lb}). Namely, we first note that for a fixed $\ell \in \{1,\dots, k \}$, all the draws from $t \in [T]$ of $\bw_{\ell, t}$ are identically distributed, so we may simplify
\begin{align*}
\Ex\left[ \hat{\bxi} \right] &= \sum_{\ell=1}^{k} \left( \sigma_{\ell+1}^{s_2} - \sigma_{\ell}^{s_2} \right)\Ex_{\bw_{\ell, t}}\left[ \hat{\bxi}_{\ell, t}\right].
\end{align*}
Then, we have that (\ref{eq:prob-include}) and (\ref{eq:exp-ub}) implies
\begin{align*}
 \left( \sigma_{\ell+1}^{s_2} - \sigma_{\ell}^{s_2} \right) \Ex_{\bw_{\ell, t}}\left[ \hat{\bxi}_{\ell, t}\right] \leq (1+\eps) \sum_{i=1}^{n}\ind\{ \alpha_i \in I_{\ell} \} \cdot ((\alpha_i - \beta)^+)^{s_2} \cdot \sfK(x_i, y).
\end{align*}
The lower bound proceeds similarly, expect we plug (\ref{eq:prob-include}) into (\ref{eq:exp-lb}). In particular, since there is no $i \in [n]$ where $\alpha_i \in I_0$, once $\ell \in \{1,\dots, k\}$, we cover the entire interval $(\beta, v.\max]$. 
\end{proof}

\begin{lemma}
Consider a fixed construction of the data structure, and suppose that the conclusion of Claim~\ref{cl:kde-correct} holds (which occurs with probability at least $1-\delta/2$). Then, for any fixed query $y \in \R^d$ with weight $\beta$, the variance of $\hat{\bx}$ over the randomness in $\emph{\QueryA}(v, y, \beta)$ satisfies
\begin{align*} 
\Var\left[ \hat{\bxi} \right] \leq \eps \left(\Ex\left[ \hat{\bxi}\right] \right)^2.
\end{align*}
\end{lemma}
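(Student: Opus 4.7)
My plan is to decompose $\hat{\bxi} = \sum_{\ell=1}^k X_\ell$ where $X_\ell := \tfrac{1}{T}\sum_{t=1}^T (\sigma_{\ell+1}^{s_2} - \sigma_\ell^{s_2})\,\hat{\bxi}_{\ell,t}$, and to exploit two independences: the samples $\bw_{\ell,t}$ are jointly independent, so the $X_\ell$'s are mutually independent and, for fixed $\ell$, the $T$ summands are i.i.d.\ (recall the $\Query$ calls are deterministic in $y$ once the data structure is fixed). Thus $\Var[\hat{\bxi}] = \sum_\ell \Var[X_\ell] = \tfrac{1}{T}\sum_\ell \Var[Y_\ell]$, where $Y_\ell := (\sigma_{\ell+1}^{s_2}-\sigma_\ell^{s_2})\hat{\bxi}_{\ell,1}$.

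The main step is a deterministic pointwise bound $Y_\ell \le O(2^{s_2})\,\Ex[Y_\ell]$, which immediately yields $\Var[Y_\ell] \leq \Ex[Y_\ell^2] \leq O(2^{s_2})(\Ex[Y_\ell])^2$. Writing $S_\ell := \sum_{i:\alpha_i \in I_\ell} \sfK(x_i, y)$, and conditioning on the event of Claim~\ref{cl:kde-correct}, we have $\hat{\bxi}_{\ell,1} = \sum_l \hat{\bzeta}_{\bv^{(l)}}(y) \leq (1+\eps) S_\ell$ uniformly in $\bw_{\ell,1}$, since the returned nodes' $S$-sets always partition a subset of $\{x_i : \alpha_i \in I_\ell\}$. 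Hence
\[ Y_\ell \le (1+\eps)(\sigma_{\ell+1}^{s_2}-\sigma_\ell^{s_2}) S_\ell \le (1+\eps)\,2^{s_2}\sigma_\ell^{s_2}\,S_\ell, \]
where I use the geometric construction $\sigma_{\ell+1}\le 2\sigma_\ell$ valid for every $\ell \in \{1,\dots, k\}$ (including $\ell = k$ by the ceiling in the choice of $k$). Running the expected-value computation from the preceding lemma ``in reverse'' gives $\Ex[Y_\ell] \ge (1-\eps)\sigma_\ell^{s_2} S_\ell$, because $\alpha_i \in I_\ell$ forces $(\alpha_i-\beta)^{s_2} > \sigma_\ell^{s_2}$. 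Dividing the two estimates yields $Y_\ell \le 2^{s_2+1}\,\Ex[Y_\ell]$.

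Combining, $\Var[X_\ell] \le \tfrac{2^{s_2+1}}{T}(\Ex[X_\ell])^2 \le \eps^2(\Ex[X_\ell])^2$ by the choice $T = 2^{O(s_2)}/\eps^2$. Since all $\Ex[X_\ell] \ge 0$, the elementary inequality $\sum_\ell (\Ex[X_\ell])^2 \le \bigl(\sum_\ell \Ex[X_\ell]\bigr)^2 = (\Ex[\hat{\bxi}])^2$ gives $\Var[\hat{\bxi}] \le \eps^2(\Ex[\hat{\bxi}])^2 \le \eps(\Ex[\hat{\bxi}])^2$, as desired. The main obstacle I anticipate is precisely the pointwise bound $Y_\ell \le O(2^{s_2})\Ex[Y_\ell]$; this is what forces the doubling partition of $(\beta, v.\max]$ and dictates the $T = 2^{O(s_2)}/\eps^2$ repetitions. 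Without restricting the dynamic range of $(\alpha_i-\beta)^{s_2}$ to a factor of $2^{s_2}$ on each $I_\ell$, a single sample $\hat{\bxi}_{\ell,1}$ could not be controlled by its expectation up to a factor depending only on $s_2$, and the variance of a naive estimator would blow up.
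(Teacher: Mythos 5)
Your proof is correct and uses the same outer decomposition as the paper (independence across $\ell$ and across the $T$ repetitions, then a per-interval bound of the form $\Ex[\hat{\bxi}_{\ell,t}^2]\leq O(2^{s_2})(\Ex[\hat{\bxi}_{\ell,t}])^2$ absorbed by the choice of $T$), but the key per-interval step is argued differently. The paper expands the second moment as a double sum over pairs $(x_i,x_j)$ with $\alpha_i,\alpha_j\in I_\ell$ and bounds the joint inclusion probability $\Prx[\beta+\bw_{\ell,t}^{1/s_2}\leq\min\{\alpha_i,\alpha_j\}]$ by $2^{s_2}$ times the product of the marginal inclusion probabilities, using $(\max\{\alpha_i,\alpha_j\}-\beta)^{s_2}\geq\sigma_\ell^{s_2}$ and $\sigma_{\ell+1}^{s_2}-\sigma_\ell^{s_2}\leq 2^{s_2}\sigma_\ell^{s_2}$. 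You instead prove the stronger almost-sure statement $Y_\ell\leq O(2^{s_2})\,\Ex[Y_\ell]$ by sandwiching both the pointwise value and the expectation between $\sigma_\ell^{s_2}S_\ell$ and $2^{s_2}\sigma_\ell^{s_2}S_\ell$ (up to $1\pm\eps$), which immediately gives the second-moment bound. Both arguments hinge on exactly the same two facts --- the doubling $\sigma_{\ell+1}\leq 2\sigma_\ell$ (including at $\ell=k$) and the lower bound $\alpha_i-\beta>\sigma_\ell$ on $I_\ell$ --- so neither is more demanding of the construction; your version is slightly cleaner (no pairwise expansion) and yields a bounded-relative-to-mean estimator, which would also support stronger concentration than Chebyshev if one wanted it. You also make explicit the final step $\sum_\ell(\Ex[X_\ell])^2\leq(\sum_\ell\Ex[X_\ell])^2$, which the paper leaves implicit in ``by the setting of $T$.''
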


\begin{proof}
For various settings of $\ell \in \{1, \dots, k\}$ and $t \in [T]$, the draws of $\bw_{\ell, t}$ are independent, so that we may write
\begin{align*}
\Var\left[ \hat{\bxi}  \right] = \sum_{\ell=1}^k \frac{1}{T} \cdot \Var\left[ \hat{\bxi}_{\ell, t}\right],
\end{align*}
and it suffices to upper bound that. We note that using the same upper bound in (\ref{eq:exp-ub}), 
\begin{align}
&\Var[\hat{\bxi}_{\ell, t}] \leq \Ex[\hat{\bxi}_{\ell, t}^2] \nonumber \\
	&\quad \leq (1+\eps)^2 \sum_{i =1}^n \sum_{j =1}^n \ind\{\alpha_i , \alpha_j  \in I_{\ell} \}  \cdot \sfK(x_i, y) \cdot \sfK(x_j, y) \cdot \Prx_{\bw_{\ell, t}}\left[ \beta + \bw_{\ell, t}^{1/s_2} \leq \min\{ \alpha_i, \alpha_j \} \right]. \label{eq:var-bound}
\end{align}
Suppose first that $\ell > 0$. Then if $\alpha_i, \alpha_j \in I_{\ell}$, then $(\max\{ \alpha_i, \alpha_j\} - \beta)^{s_2} \geq \sigma_{\ell}^{s_2}$, 
\begin{align}
 \Prx_{\bw_{\ell, t}}\left[ \beta + \bw_{\ell, t}^{1/s_2} \leq \min\{ \alpha_i, \alpha_j \} \right] &\leq \dfrac{((\min\{ \alpha_i, \alpha_j\} - \beta_j)^+)^{s_2}}{\sigma_{\ell+1}^{s_2} - \sigma_{\ell}^{s_2}} \cdot \dfrac{(( \max\{ \alpha_i, \alpha_j\} - \beta)^+)^{s_2}}{\sigma_{\ell}^{s_2}}\nonumber  \\
 		&\leq \left( \frac{((\alpha_i - \beta)^+)^{s_2}}{\sigma_{\ell+1}^{s_2} - \sigma_{\ell}^{s_2}} \cdot \frac{((\alpha_i - \beta)^+)^{s_2}}{\sigma_{\ell+1}^{s_2} -\sigma_{\ell}^{s_2}}\right) \cdot \left( \frac{\sigma_{\ell+1}^{s_2} - \sigma_{\ell}^{s_2}}{\sigma_{\ell}^{s_2}}\right) \nonumber  \\
		&\leq 2^{s_2} \cdot \Prx_{\bw_{\ell, t}}\left[ \beta + \bw_{\ell, t}^{1/s_2} \leq \alpha_i \right]   \Prx_{\bw_{\ell, t}}\left[ \beta + \bw_{\ell, t}^{1/s_2} \leq \alpha_j \right] .\label{eq:prob-bound}
\end{align}
Plugging (\ref{eq:prob-bound}) into (\ref{eq:var-bound}), we have that every $\ell \in \{1,\dots, k \}$ satisfies
\[ \Var\left[\hat{\bxi}_{\ell, t}\right] \leq (1+\eps)^2 \cdot 2^{s_2} \left( \Ex[\hat{\bxi}_{\ell, t}]\right)^2. \]
By the setting of $T$, we obtain the desired bound.
\end{proof}

\subsection{Proof of Lemma~\ref{lem:est-alpha} and Lemma~\ref{lem:est-beta}}\label{sec:est-alpha-proofs}

We briefly describe $\EstBeta$ (the case of $\EstAlpha$ is a symmetric argument, by replacing $\alpha$'s and $\beta$'s, as well as changing the signs). The intuition is the following: we initialize a data structure for $\AugmentedKDE$ with the kernel $\sfK$ in (\ref{eq:kernel}) and the parameter $s_2 = s-1$. This is done so that for every $i \in [n]$ and $j \in [m]$, 
\[ f_{ij}(\alpha_i, \beta_j) = \left(1 - \frac{1}{s} \right)^{s-1} \cdot \mu_i \left( (\alpha_i -\beta_j)^+ \right)^{s_2} \cdot \sfK(x_i, y_j). \]
Thus, we preprocess the data structure with the dataset $\{x_1,\dots, x_n \}$ and weights $\alpha \in \R^n$ and $\mu \in \R^n$. Then, we will iterate through each $j \in[m]$, and we query the data structure with $y_j$ and $\beta_j$ to obtain $\bxi_j$.
\ignore{In particular, for each $u \in U_{\ell}$, we first remove all points $x_i \in A_{u}$, then we query $\QueryA(v, x_j, \beta_j)$ for each $j \in B_u$, and then we add $x_i \in A_u$ with weight $\alpha_i$ back. The formal description appears in Figure~\ref{fig:lem-est-d}.

\begin{figure}
	\begin{framed}
		\noindent Algorithm $\EstAlpha(\Omega, u_{\ell}, \alpha,\beta, \eps)$.
		\begin{flushleft}
			\noindent {\bf Input:}  A ground set $\Omega = \{ x_1,\dots, x_n \}$, a function $u_{\ell} \colon [n] \to U_{\ell}$, two vectors $\alpha,\beta \in \R^n$ where $i \in A, j \in B$ satisfy $|\alpha_i + \beta_j| \in \{ 0 \} \cup [r_{\min} / M, r \cdot M]$ for $M = \poly(dn\Phi 2^s/\eps)$, and an error parameter $\eps > 0$. \\ 
			\noindent{\bf Output:} For every $u \in U_{\ell}$, an estimate $\hat{\bD}_u^{(B)} \in \R_{\geq 0}$ (the estimates for $\hat{\bD}_u^{(A)}$ are analogous). 
		\begin{enumerate}
				\item The kernel function $\sfK \colon \R^d \times \R^d \to \R_{\geq 0}$ of (\ref{eq:kernel}) with $\eps_0 = \eps / 2$, and we let $\tau, \delta$ be $1/\poly(nd\Phi2^s/\eps)$. Consider an arbitrary order of nodes in $U_{\ell}$ where $B_u$ is non-empty, and write $u_1,\dots, u_{m}$, where $m \leq n$.
				\item We initialize a data structure $v$ for $\AugmentedKDE(\sfK, s-1, \Phi, \eps/2, \tau, \delta)$, and for every $i \in A$, we execute $\UpdateAddA(v, x_i, \alpha_i)$.
				\item For each $k \in [m]$, we perform the following:
				\begin{itemize}
				\item For every $i \in A_{u_k}$, we execute $\UpdateRemA(v, x_i, \alpha_i)$.
				\item For every $j \in B_{u_k}$, we execute $\QueryA(v, y_j, \beta_j)$. If we write $\hat{\boldeta}_j$ for the output of $\QueryA(v, y_j, \beta_j)$, we let
				\[ \hat{\bD}_{u_k}^{(B)} = \left(1 - \frac{1}{s} \right)^{s-1}\sum_{j \in B_{u_k}} \hat{\boldeta}_j. \]
				\item For every $i \in A_{u_k}$, we execute $\UpdateAddA(v, x_i, \alpha_i)$.
				\end{itemize}
			\end{enumerate}
		\end{flushleft}
	\end{framed}
	\caption{Description for $\EstAlpha$ Algorithm.} \label{fig:lem-est-d}
\end{figure}}

\subsection{Proof of Lemma~\ref{lem:penalty}}\label{sec:est-penalty-proof}

The algorithm $\EstPenalty$ also uses Theorem~\ref{thm:aug-kde}. We initialize the kernel function $\sfK \colon \R^d \times \R^d \to \R_{\geq 0}$ as in (\ref{eq:kernel}), but with $s_2 = s$. Since $\rho = s / (s-1)$, we've set things up so
\begin{align*} 
(1-\eps_0) \left(f_{ij}(\alpha_i, \beta_j) \cdot \| x_i - y_j \|_{2}\right)^{\rho} &\leq \left(1 - \frac{1}{s}\right)^{s_2} \cdot ((\alpha_i - \beta_j)^+)^{s} \cdot \sfK(x_i, y_j) \\
	&\leq \left( f_{ij}(\alpha_i,\beta_j) \cdot \|x_i - y_j \|_2\right)^{\rho},
\end{align*}
and thus $\EstPenalty(\alpha,\beta, \eps, \delta)$ needs to approximate
\[ \sum_{i=1}^n \sum_{j=1}^m \mu_i \nu_j \cdot ((\alpha_i - \beta_j)^+)^{s_2} \cdot \sfK(x_i, y_j). \]
This is a simple application of Theorem~\ref{thm:aug-kde}. We preprocess a data structure $v$ for $\AugmentedKDE(\sfK, s, \Phi, \eps/2, \delta)$ with the dataset $\{x_1,\dots, x_n\}$ and weights $\alpha$, and we query it for each $y_1,\dots, y_m$ with the weights $\beta \in \R^m$. If each estimate is $\hat{\bxi}_j$, the desired output estimate is given by $\sum_{j=1}^m \hat{\bxi}_j$.

\bibliographystyle{alpha}
\bibliography{waingarten}

\begin{thebibliography}{36}
\providecommand{\natexlab}[1]{#1}
\providecommand{\url}[1]{\texttt{#1}}
\expandafter\ifx\csname urlstyle\endcsname\relax
  \providecommand{\doi}[1]{doi: #1}\else
  \providecommand{\doi}{doi: \begingroup \urlstyle{rm}\Url}\fi

\bibitem[Agarwal and Sharathkumar(2014)]{AS14}
Pankaj~K. Agarwal and R.~Sharathkumar.
\newblock Approximation algorithms for bipartite matching with metric and
  geometric costs.
\newblock In \emph{Proceedings of the 46th {ACM} Symposium on the Theory of
  Computing ({STOC}~'2014)}, pages 555--564, 2014.

\bibitem[Agarwal et~al.(2022)Agarwal, Chang, Raghvendra, and Xiao]{ACRX22}
Pankaj~K. Agarwal, Hsien-Chih Chang, Sharath Raghvendra, and Allen Xiao.
\newblock Deterministic, near-linear $\epsilon$-approximation algorithm for
  geometric bipartite matching.
\newblock In \emph{Proceedings of the 54th {ACM} Symposium on the Theory of
  Computing ({STOC}~'2022)}, 2022.

\bibitem[Altschuler et~al.(2017)Altschuler, Weed, and Rigollet]{AWR17}
Jason Altschuler, Jonathan Weed, and Philippe Rigollet.
\newblock Near-linear time approximation algorithms for optimal transport via
  sinkhorn iteration.
\newblock In \emph{Proceedings of Advances in Neural Information Processing
  Systems ({NeurIPS}~'2017)}, 2017.

\bibitem[Altschuler et~al.(2019)Altschuler, Bach, Rudi, and Niles-Weed]{ABRN19}
Jason Altschuler, Francis Bach, Alessandro Rudi, and Jonathan Niles-Weed.
\newblock Massively scalable sinkhorn distances via the nystr{\"o}m method.
\newblock In \emph{Proceedings of Advances in Neural Information Processing
  Systems~32 ({NeurIPS}~'2019)}, 2019.

\bibitem[Andoni and Razenshteyn(2015)]{AR15}
Alexandr Andoni and Ilya Razenshteyn.
\newblock Optimal data-dependent hashing for approximate near neighbors.
\newblock In \emph{Proceedings of the 47th {ACM} Symposium on the Theory of
  Computing ({STOC}~'2015)}, pages 793--801, 2015.
\newblock Available as arXiv:1501.01062.

\bibitem[Andoni et~al.(2008)Andoni, Indyk, and Krauthgamer]{AIK08}
Alexandr Andoni, Piotr Indyk, and Robert Krauthgamer.
\newblock Earth mover distance over high-dimensional spaces.
\newblock In \emph{Proceedings of the 19th {ACM-SIAM} Symposium on Discrete
  Algorithms ({SODA}~'2008)}, pages 343--352, 2008.

\bibitem[Andoni et~al.(2009)Andoni, Ba, Indyk, and Woodruff]{ABIW09}
Alexandr Andoni, Khanh~Do Ba, Piotr Indyk, and David Woodruff.
\newblock Efficient sketches for earth-mover distance, with applications.
\newblock In \emph{Proceedings of the 50th Annual {IEEE} Symposium on
  Foundations of Computer Science ({FOCS}~'2009)}, 2009.

\bibitem[Andoni et~al.(2014)Andoni, Nikolov, Onak, and Yaroslavtsev]{ANOY14}
Alexandr Andoni, Aleksandar Nikolov, Krzysztof Onak, and Grigory Yaroslavtsev.
\newblock Parallel algorithms for geometric graph problems.
\newblock In \emph{Proceedings of the 46th {ACM} Symposium on the Theory of
  Computing ({STOC}~'2014)}, 2014.

\bibitem[Andoni et~al.(2015)Andoni, Krauthgamer, and Razenshteyn]{AKR15}
Alexandr Andoni, Robert Krauthgamer, and Ilya Razenshteyn.
\newblock Sketching and embedding are equivalent for norms.
\newblock In \emph{Proceedings of the 47th {ACM} Symposium on the Theory of
  Computing ({STOC}~'2015)}, pages 479--488, 2015.
\newblock Available as arXiv:1411.2577.

\bibitem[Arjovsky et~al.(2017)Arjovsky, Chintala, and Bottou]{ACB17}
Martin Arjovsky, Soumith Chintala, and L\'{e}on Bottou.
\newblock Wasserstein generative adversarial networks.
\newblock In \emph{Proceedings of the 34th International Conference on Machine
  Learning ({ICML}~'2017)}, 2017.

\bibitem[Backurs et~al.(2018)Backurs, Charikar, Indyk, and Siminelakis]{BCIS18}
Arturs Backurs, Moses Charikar, Piotr Indyk, and Paris Siminelakis.
\newblock Efficient density evaluation for smooth kernels.
\newblock In \emph{Proceedings of the 59th Annual {IEEE} Symposium on
  Foundations of Computer Science ({FOCS}~'2018)}, 2018.

\bibitem[Backurs et~al.(2020)Backurs, Dong, Indyk, Razenshteyn, and
  Wagner]{BDIRW20}
Arturs Backurs, Yihe Dong, Piotr Indyk, Ilya Razenshteyn, and Tal Wagner.
\newblock Scalable nearest neighbor search for optimal transport.
\newblock In \emph{Proceedings of the 37th International Conference on Machine
  Learning ({ICML}~'2020)}, 2020.

\bibitem[Backurs et~al.(2021)Backurs, Indyk, Musco, and Wagner]{BIMW21}
Arturs Backurs, Piotr Indyk, Cameron Musco, and Tal Wagner.
\newblock Faster kernel matrix algebra via density estimation.
\newblock In \emph{Proceedings of the 38th International Conference on Machine
  Learning ({ICML}~'2021)}, 2021.

\bibitem[Bakshi et~al.(2022)Bakshi, Indyk, Kacham, Silwal, and Zhou]{BIKSZ22}
Ainesh Bakshi, Piotr Indyk, Praneeth Kacham, Sandeep Silwal, and Samson Zhou.
\newblock Sub-quadratic algorithms for kernel matrices via kernel density
  estimation.
\newblock In \emph{arXiv preprint arXiv:2212.00642}, 2022.

\bibitem[Ba\v{c}kurs and Indyk(2014)]{BI14}
Arturs Ba\v{c}kurs and Piotr Indyk.
\newblock Better embeddings for planar earth-mover distance over sparse sets.
\newblock In \emph{Proceedings of the 41st International Colloquium on
  Automata, Languages and Programming ({ICALP}~'2014)}, 2014.

\bibitem[Bonneel et~al.(2015)Bonneel, Rabin, Peyr\'{e}, and Pfister]{BRPP15}
Nicolas Bonneel, Julien Rabin, Gabriel Peyr\'{e}, and Hanspeter Pfister.
\newblock Sliced and radon wasserstein barycenters of measures.
\newblock \emph{Journal of Mathematical Imaging and Vision}, 51, 2015.

\bibitem[Charikar(2002)]{C02}
Moses Charikar.
\newblock Similarity estimation techniques from rounding algorithms.
\newblock In \emph{Proceedings of the 34th {ACM} Symposium on the Theory of
  Computing ({STOC}~'2002)}, pages 380--388, 2002.

\bibitem[Charikar and Siminelakis(2017)]{CS17}
Moses Charikar and Paris Siminelakis.
\newblock Hashing-based-estimators for kernel density in high dimensions.
\newblock In \emph{Proceedings of the 58th Annual {IEEE} Symposium on
  Foundations of Computer Science ({FOCS}~'2017)}, 2017.

\bibitem[Charikar et~al.(2020)Charikar, Kapralov, Nouri, and
  Siminelakis]{CKNS20}
Moses Charikar, Michael Kapralov, Navid Nouri, and Paris Siminelakis.
\newblock Kernel density estimation through density constrained near neighbor
  search.
\newblock In \emph{Proceedings of the 61st Annual {IEEE} Symposium on
  Foundations of Computer Science ({FOCS}~'2020)}, 2020.

\bibitem[Chen et~al.(2022{\natexlab{a}})Chen, Kyng, Liu, Peng, Gutenberg, and
  Sachdeva]{CKLPPS22}
Li~Chen, Rasmus Kyng, Yang~P. Liu, Richard Peng, Maximilian~Probst Gutenberg,
  and Sushant Sachdeva.
\newblock Maximum flow and minimum-cost flow in almost-linear time.
\newblock In \emph{Proceedings of the 63rd Annual {IEEE} Symposium on
  Foudnations of Computer Science ({FOCS}~'2022)}, 2022{\natexlab{a}}.

\bibitem[Chen et~al.(2022{\natexlab{b}})Chen, Jayaram, Levi, and
  Waingarten]{CJLW22}
Xi~Chen, Rajesh Jayaram, Amit Levi, and Erik Waingarten.
\newblock New streaming algorithms for high dimensional emd and mst.
\newblock In \emph{Proceedings of the 54th {ACM} Symposium on the Theory of
  Computing ({STOC}~'2022)}, 2022{\natexlab{b}}.

\bibitem[Courty et~al.(2016)Courty, Flamary, Tuia, and Rakotomamonjy]{CFTR16}
Nicolas Courty, R\'{e}my Flamary, Devis Tuia, and Alain Rakotomamonjy.
\newblock Optimal transport for domain adaptation.
\newblock \emph{{IEEE} Transactions on Pattern Analysis and Machine
  Intelligence}, 39\penalty0 (9):\penalty0 1853 -- 1865, 2016.

\bibitem[Cuturi(2013)]{C13}
Marco Cuturi.
\newblock Sinkhorn distances: Lightspeed computation of optimal transport.
\newblock In \emph{Proceedings of Advances in Neural Information Processing
  Systems ({NIPS}~'2013)}, 2013.

\bibitem[Har-Peled et~al.(2013)Har-Peled, Indyk, and Sidiropoulos]{HIS13}
Sariel Har-Peled, Piotr Indyk, and Anastasios Sidiropoulos.
\newblock Euclidean spanners in high dimensions.
\newblock In \emph{Proceedings of the 24th {ACM-SIAM} Symposium on Discrete
  Algorithms ({SODA}~'2013)}, 2013.

\bibitem[Indyk(2004)]{I04}
Piotr Indyk.
\newblock Approximate nearest neighbor under edit distance via product metrics.
\newblock In \emph{Proceedings of the 15th {ACM-SIAM} Symposium on Discrete
  Algorithms ({SODA}~'2004)}, pages 646--650, 2004.

\bibitem[Indyk and Thaper(2003)]{IT03}
Piotr Indyk and Nitin Thaper.
\newblock Fast color image retrieval via embeddings.
\newblock In \emph{Workshop on Statistical and Computational Theories of Vision
  (at ICCV)}, 2003.

\bibitem[Khesin et~al.(2019)Khesin, Nikolov, and Paramonov]{KNP19}
Andrey~Boris Khesin, Aleksandar Nikolov, and Dmitry Paramonov.
\newblock Preconditioning for the geometric transportation problem.
\newblock In \emph{Proceedings of the 35th International Symposium on
  Computational Geometry ({SoCG}~'2019)}, 2019.

\bibitem[Kiem et~al.(2020)Kiem, Le, Ho, Pham, and Bui]{PLHPB20}
Pham Kiem, Khang Le, Nhat Ho, Tung Pham, and Hung Bui.
\newblock On unbalanced optimal transport: An analysis of sinkhorn algorithm.
\newblock In \emph{Proceedings of the International Conference on Machine
  Learning ({ICML}~'2020)}, 2020.

\bibitem[Kusner et~al.(2015)Kusner, Sun, Kolkin, and Weinberger]{KSKW15}
Matt Kusner, Yu~Sun, Nicholas Kolkin, and Kilian Weinberger.
\newblock From word embeddings to document distances.
\newblock In \emph{Proceedings of the 32nd International Conference on Machine
  Learning ({ICML}~'2015)}, 2015.

\bibitem[Le et~al.(2021)Le, Nguyen, Nguyen, Pham, Bui, and Ho]{LNNPHH21}
Khang Le, Huy Nguyen, Quang~M Nguyen, Tung Pham, Hung Bui, and Nhat Ho.
\newblock On robust optimal transport: computational complexity and barycenter
  computation.
\newblock In \emph{Proceedings of Advances in Neural Information Processing
  Systems~34 ({NeurIPS}~'2021)}, 2021.

\bibitem[Paty and Cuturi(2019)]{PC19b}
Fran{\c{c}}ois-Pierre Paty and Marco Cuturi.
\newblock Subspace robust wasserstein distances.
\newblock In \emph{Proceedings of the 36th International Conference on Machine
  Learning ({ICML}~'2019)}, 2019.

\bibitem[Peyr\'{e} and Cuturi(2019)]{PC19}
Gabriel Peyr\'{e} and Marco Cuturi.
\newblock Computational optimal transport: With applications to data science.
\newblock \emph{Foundations and Trends{\textregistered} in Machine Learning},
  11\penalty0 (5--6):\penalty0 355--607, 2019.

\bibitem[Rohatgi(2019)]{R19}
Dhruv Rohatgi.
\newblock Conditional hardness of earth movers distance.
\newblock In \emph{Approximation, Randomization and Combinatorial Optimization.
  Algorithms and Techniques}, 2019.

\bibitem[Sharathkumar and Agarwal(2012)]{SA12}
R.~Sharathkumar and Pankaj~K. Agarwal.
\newblock A near-linear time $\epsilon$-approximation algorithm for bipartite
  geometric matching.
\newblock In \emph{Proceedings of the 44th {ACM} Symposium on the Theory of
  Computing ({STOC}~'2012)}, 2012.

\bibitem[Sherman(2017)]{S17}
Jonah Sherman.
\newblock Generalized preconditioning and undirected minimum cost flow.
\newblock In \emph{Proceedings of the 28th {ACM-SIAM} Symposium on Discrete
  Algorithms ({SODA}~'2017)}, 2017.

\bibitem[Siminelakis et~al.(2019)Siminelakis, Rong, Bailis, Charikar, and
  Levis]{SRBCL19}
Paris Siminelakis, Kexin Rong, Peter Bailis, Moses Charikar, and Philip Levis.
\newblock Rehashing kernel evaluation in high dimensions.
\newblock In \emph{Proceedings of the 36th International Conference on Machine
  Learning ({ICML}~'2019)}, 2019.

\end{thebibliography}

\end{document}